\newtheorem{theorem}{Theorem}
\begin{document}
\title{Improving Performance of Iterative Methods by Lossy Checkponting}

\author{Dingwen Tao}
\affiliation{
\institution{University of California, Riverside}
\city{Riverside}
\state{CA}
\country{USA}
}
\email{dtao001@cs.ucr.edu}

\author{Sheng Di}
\affiliation{
\institution{Argonne National Laboratory}
\city{Lemont}
\state{IL}
\country{USA}
}
\email{sdi1@anl.gov}

\author{Xin Liang}
\affiliation{
\institution{University of California, Riverside}
\city{Riverside}
\state{CA}
\country{USA}
}
\email{xlian007@ucr.edu}

\author{Zizhong Chen}
\affiliation{
\institution{University of California, Riverside}
\city{Riverside}
\state{CA}
\country{USA}
}
\email{chen@cs.ucr.edu}

\author{Franck Cappello}
\affiliation{
\institution{Argonne National Laboratory}
\city{Lemont}
\state{IL}
\country{USA}
}
\email{cappello@mcs.anl.gov}

\thanks{Corresponding author:
Sheng Di, Mathematics and Computer Science Division, Argonne
National Laboratory, 9700 Cass Avenue, Lemont, IL 60439, USA}

%
%
\begin{CCSXML}
<ccs2012>
<concept>
<concept_id>10011007.10010940.10011003.10011005</concept_id>
<concept_desc>Software and its engineering~Software fault tolerance</concept_desc>
<concept_significance>500</concept_significance>
</concept>
</ccs2012>
\end{CCSXML}


\keywords{Iterative Methods; Numerical Linear Algebra; Resilience; Checkpoint/Restart; Lossy Compression; Performance Optimization}

\begin{abstract}
Iterative methods are commonly used approaches to solve large, sparse linear systems, which are fundamental operations for many modern scientific simulations.  When the large-scale iterative methods are running with a large number of ranks in parallel, they have to checkpoint the dynamic variables periodically in case of unavoidable fail-stop errors, requiring fast I/O systems and large storage space. 
To this end, significantly reducing the checkpointing overhead is critical to improving the overall performance of iterative methods.  
Our contribution is fourfold.
(1) We propose a novel lossy checkpointing scheme that can significantly improve the checkpointing performance of iterative methods by leveraging lossy compressors.
(2) We formulate a lossy checkpointing performance model  and derive theoretically an upper bound for the extra number of iterations caused by the distortion of data in lossy checkpoints, in order to guarantee the performance improvement under the lossy checkpointing scheme. (3) We analyze the impact of lossy checkpointing (i.e., extra number of iterations caused by lossy checkpointing files) for multiple types of iterative methods. (4) We evaluate the lossy checkpointing scheme with optimal checkpointing intervals on a high-performance computing environment with 2,048 cores, using a well-known scientific computation package PETSc and a state-of-the-art checkpoint/restart toolkit. Experiments show that our optimized lossy checkpointing scheme can significantly reduce the fault tolerance overhead for iterative methods by $23\%{\sim}70\%$ compared with traditional checkpointing and $20\%{\sim}58\%$ compared with lossless-compressed checkpointing, in the presence of system failures.
\end{abstract}
\maketitle

\section{Introduction}
\label{sec:intro}
Scientific simulations involving partial differential equations (PDEs) require solving sparse linear system within each timestep. 
At large scale, sparse linear systems are solved by using iterative methods, such as the conjugate gradient (CG) method. Thus, iterative methods are one of the most crucial components determining the scalability and efficiency of HPC applications. For example, Becciani et al. \cite{becciani2014solving} presented a study of solving a 5-parameter astrometric catalogue at the micro-arcsecond level for about 1 billion stars of our Galaxy under a cornerstone mission (called Gaia) launched by European Space Agency. Their experimental results show that solving the resulting sparse linear system of $7.2 \times 10^{10}$ equations for the last period of the Gaia mission can take 1,000 to 4,000 iterations for convergence, totaling up to $1.96 \times 10^5$ seconds (i.e., more than 54 hours) on 2,048 BlueGeneQ nodes.
When running on high-performance computing (HPC) environments using potentially tens of thousands of nodes and millions of cores for hours or days towards exascale computing \cite{parkere2013role}, fail-stop errors are inevitable.
Accordingly, how to effectively protect the iterative methods against such failures is an important research issue, determining the overall performance of iterative methods in HPC environments. 

Many algorithm-based fault tolerance approaches have been proposed to tolerate silent data corruptions  with iterative methods, and they work efficiently because of little storage overhead. Tolerating fail-stop errors, however, is much more challenging because it requires checkpointing or saving multiple large vector data sets at runtime, leading to large checkpointing overhead.

For many PDE-based scientific simulations, the sparse linear system includes most of the variables that are involved in the application, so checkpointing for  iterative methods determines overall checkpointing performance \cite{heath2002scientific}. For example, SIMPLE (Semi-Implicit Method for Pressure-Linked Equations) \cite{patankar1980numerical} algorithm is a widely used numerical method to solve the Navier-Stokes equations \cite{chorin1968numerical} for Computational Fluid Dynamics (CFD) problems. For 3D CFD problems, there are totally nine fluid-flow scalar variables, five of which need to be checkpointed during iterative methods. As a result, significantly improving the checkpointing performance of the iterative methods that it uses can significantly improve the application performance, since most of application state used by iterative methods. We refer readers to \cite{mora2001numerical} for more details of 3D CFD problems and SIMPLE-like algorithms in parallel.

In this work, we propose an efficient execution scheme, specifically a lossy checkpointing scheme, in order to improve the overall performance for iterative methods running in the presence of failures. Unlike the traditional checkpointing approach, our lossy checkpointing scheme integrates a lossy compression technique into the checkpoint/restart model. That is, the checkpointing data is compressed by a lossy compressor before being moved to the parallel file system (PFS), which is an approach that can significantly reduce the run-time checkpointing overhead.
Upon a failure, the latest checkpointing file is loaded and goes through a decompression step to reconstruct the checkpointing data for the recovery of the iterative execution. 

Checkpoint/restart research has been conducted for decades in order to optimize the performance of various large-scale scientific executions, but  lossy-compressed checkpointing is rarely studied. Lossy compressed checkpointing raises two challenging issues.
(1) What is the impact of lossy checkpointing data on the execution performance? Specifically, can the iterative methods still converge, or how many extra iterations will be introduced after restarting from a lossy checkpoint? (2) Is adopting lossy compression in the checkpointing model a worthwhile method for improving the overall performance? Specifically, how much performance gain can be achieved based on the checkpoints with reduced size? 

To address such two key issues, we make following contributions. 
\begin{itemize}
\item We propose a novel lossy checkpointing scheme that significantly improves the performance for iterative methods. In particular, we exploit a lossy checkpointing scheme under which both the lossy compression and checkpointing can be performed efficiently for the iterative methods. 
\item We design a performance model that can formulate the overall performance of the execution with lossy checkpointing in the presence of failures. In particular, we derive an upper bound for the extra number of iterations caused by the lossy checkpoints against the reduced checkpointing overheads, which is a sufficient condition to determine whether the lossy checkpointing can get a performance gain for an iterative method in numerical linear algebra.
\item We explore the impact of the lossy checkpointing on the extra number of iterations for multiple iterative methods, including stationary iterative methods, GMRES, and CG.
\item We evaluate our lossy checkpointing scheme with optimized checkpointing intervals based on multiple iterative methods provided by PETSc, using both lossless and lossy compressors, on a parallel environment with up to 2,048 cores. Experiments show that our solution reduces the fault tolerance overhead by $23\%{\sim}70\%$ compared with traditional checkpointing and $20\%{\sim}58\%$ compared with lossless checkpointing.
\end{itemize}

The rest of the paper is organized as follows. In Section \ref{sec:related}, we discuss  related work. In Section \ref{sec:checkpoint}, we describe the traditional checkpointing method without lossy compressors. In Section \ref{sec:lossy}, we propose our lossy checkpointing scheme with state-of-the-art lossy compression techniques included, and we provide an in-depth analysis of checkpoint/restart overhead and the impact of the lossy checkpointing on convergence. In Section \ref{sec:evaluation}, we present our experimental evaluation results. 
In Section \ref{sec:conclusion}, we conclude with a brief discussion of future work. 

\section{Related Work}
\label{sec:related}

Recently, a study of the Blue Waters system  \cite{bluewater} showed that an event that required remedial repair action occurred every 4.2 hours on average and that systemwide events occurred approximately every 160 hours. To avoid remedial actions such as redoing computations, researchers have designed many fault tolerance techniques for HPC applications \cite{liang2017correcting,wu2017silent, wu2016towards, chen2016online, di2016adaptive, wu2014ft, li2015applicationspecific}. 

\paragraph{Checkpoint/Restart Techniques} One of the most widely used techniques is the checkpoint/restart model, and the corresponding optimization strategies have been studied for many years.
Plank et al. \cite{diskless} proposed a diskless checkpointing approach that reduces the checkpoint overhead by storing the checkpoints locally in processor memories. 
However, diskless checkpointing can survive only partial failures: it is unable to deal with the failure of the whole system. 
A multilevel checkpoint/restart model \cite{scr,fti} was proposed to provide tolerance for different types of failures.
Fault Tolerance Interface (FTI) \cite{fti}, for example, supports four levels of checkpointing: local storage device, partner-copy, Reed-Solomon encoding, and PFS.
Di et al. \cite{di2014optimization-1, di2014optimization-2} proposed a multilevel checkpoint/restart model based on FTI to optimize the checkpoint intervals for different levels. In addition to the traditional checkpointing model, a few studies have demonstrated the feasibility of using compression techniques to improve the checkpointing performance.
Islam et al. \cite{islam} adopted data-aware aggregation and lossless data compression to improve the checkpoint/restart performance.
Sasaki et al. \cite{ssem} proposed a lossy compression technique based on wavelet transformation for checkpointing and explored its impact in a production climate application. 
Calhoun et al. \cite{calhoun} verified the feasibility of using lossy compression in checkpointing two specific PDE simulations experimentally. Their results show that the compression errors in the checkpointing files can be masked by the numerical errors in the discretization, leading to improved performance without degraded overall accuracy in the simulation.
To the best of our knowledge, our work is the first attempt to build a generic, theoretical performance model considering the impact of lossy compression techniques on the HPC checkpointing model and significantly improve the overall performance for multiple iterative methods, such as stationary iterative methods, GMRES, and CG. 

\paragraph{Fault Tolerance Techniques for Iterative Methods} Iterative methods are widely used for solving systems of equations or computing eigenvalues of large sparse matrices. Although some fault-tolerant iterative methods have been designed, most are from an algorithmic level, and the performance is highly dependent on the specific characteristics of algorithms.
For example, Tao et al. \cite{newsum} proposed an online algorithm-based fault tolerance (ABFT) approach to detect and recover soft errors for general iterative methods.
For some specific iterative algorithms, 
Chen \cite{chen2013online} developed an online ABFT approach for a subset of the Krylov methods by leveraging the orthogonality relationship of two vectors.
Bridges et al. \cite{bridges2012fault} and Elliot et al. \cite{ftgmres} targeted GMRES based on its special characteristics and proposed a fault-tolerant version via selective reliability. 
Similar to that work, Sao and Vuduc \cite{sao2013self} studied self-stabilizing corrections after error detection for CG algorithm. 
For fail-stop failures, Langou et al. \cite{langou} designed an algorithm-based recovery scheme for iterative methods, called lossy approach, that recovers an approximation of the lost data, but it is limited to the block Jacobi algorithm.
Chen \cite{chen2011algorithm} proposed an algorithm-based recovery method that utilizes inherent redundant information for accurately recovering the lost data, but it is limited to the memory failure situation.
Agullo et al. \cite{agullo2013towards} proposed a technique that can recover from process failures followed by restarting strategies in Krylov subspace solvers where lost entries of the iterate are interpolated to define a new initial guess before restarting the Krylov method.
Asynchronous iterations \cite{bahi2007parallel} proposed by Bahi et al. are linear solvers designed to
tolerate message delays when applying the matrix in parallel. 

\paragraph{Scientific Data Compression} Scientific data compression has been studied
for years. The data compressors can be split into two categories: lossless and lossy. Lossless compressors make sure that the reconstructed
data set after the decompression is exactly the same
as the original data set. Such a constraint may significantly
limit the compression ratio (up to 2 in general \cite{lossless2006}) on the compression of scientific
data. The reason is that scientific data are composed mainly of floating-point values and their tailing mantissa bits could be too random to compress effectively.  State-of-the-art lossy compressors include SZ \cite{sz16, sz17}, ZFP \cite{zfp}, ISABELA \cite{isabela}, FPZIP \cite{fpzip}, SSEM \cite{ssem}, and NUMARCK \cite{numarck}. Basically, they can be categorized
into two models: prediction based  and transform based. A prediction-based compressor predicts data values for each data point and encodes the difference between every predicted value and its corresponding real value based on a quantization method. Typical examples are SZ \cite{sz16, sz17}, ISABELA \cite{isabela}, and FPZIP \cite{fpzip}. 
The block-transform-based compressor transforms the original data to another space where most of the generated data
is very small (close-to-zero), such that the data can be stored with a certain loss in terms of user-required error bounds.
For instance, SSEM \cite{ssem} and ZFP \cite{zfp} adopt a discrete Wavelet transform and a customized orthogonal transform, respectively.
Lossy compression techniques, however, are  used mainly for saving storage space and reducing the I/O cost of dumping the analysis data. How to make use of the lossy compressors to improve the checkpointing performance with iterative methods is still an open question.
\section{Traditional Checkpointing Technique for Iterative Methods}
\label{sec:checkpoint}

Before presenting our lossy checkpointing scheme, we investigate the traditional checkpointing techniques for iterative methods.

According to a  study of recovery patterns for iterative methods by Langou et al. \cite{langou}, we need to classify the variables of the algorithms in order to form a fault-tolerant iterative method with the checkpoint/recovery model. All the variables can be categorized into three types: 
\begin{itemize}
\item Static variables: need to be stored once, for example, the system matrix $A$, the preconditioner matrix $M$, and the right-hand side vector $b$;
\item Dynamic variables: change along the iterations, for example, the approximate solution vector $x^{(i)}$;
\item Recomputed variables: are worth being recomputed after a failure rather than being checkpointed; for example, the residual vector $r$ can be recomputed by $r^{(i)} = b - Ax^{(i)}$). The term ``worth'' here means that recomputing some variables could be faster than obtaining them through a checkpoint.
\end{itemize}
 
Although the recomputed variables also need to be recovered during restarting after failures/errors, we still classify them as a separate category because they are recovered by a different strategy. How to recover a variable depends on the recovery overheads of the particular strategy. A scalar computed through global vector dot product, for example, is too expensive to compute, so it will be treated as a dynamic variable during the checkpointing.

After the classification is finished, we can form the fault-tolerant iterative methods with the checkpoint/recovery model as follows.
\begin{itemize}
\item Checkpoint
\begin{enumerate}
\item Checkpoint static variables only at the beginning before going into the execution with iterations,
\item Checkpoint dynamic variables every several iterations.
\end{enumerate}
\item Recovery
\begin{enumerate}
\item Recover a correct computational environment,
\item Recover static variables,
\item Recover dynamic variables,
\item Recover recomputed variables based on the reconstructed static and dynamic variables.
\end{enumerate}
\end{itemize}

\begin{algorithm}
\caption{Fault-tolerant preconditioned conjugate gradient (PCG) algorithm with traditional checkpointing.}
\renewcommand{\algorithmiccomment}[1]{/*#1*/}
\begin{flushleft}
\textbf{Input}: linear system matrix $A$, preconditioner $M$, and right-hand side vector $b$\\
\textbf{Output}: approximate solution $x$
\end{flushleft}

\begin{algorithmic}[1]
\STATE Compute $r^{(0)} = b - Ax^{(0)}$, $z^{(0)} = M^{-1}r^{(0)}$, $p^{(0)} = z^{(0)}$, $\rho_0 = {r^{(0)}}^Tz^{(0)}$ for some initial guess $x^{(0)}$
\FOR{$i = 0, 1, \cdots$}
\IF {(($i > 0$) and ($i\%ckpt\_intvl = 0$))}
\STATE \textit {Checkpoint: $i, \rho_i$ and $p^{(i)}, x^{(i)}$}
\ENDIF
\IF {(($i > 0$) and (recover))}
\STATE \textit{Recover: $A, M, i, \rho_i, p^{(i)}, x^{(i)}$}
\STATE Compute $r^{(i)} = b - Ax^{(i)}$
\ENDIF
\STATE $q^{(i)} = Ap^{(i)}$
\STATE $\alpha_i = \rho_i/{p^{(i)}}^Tq^{(i)}$
\STATE $x^{(i+1)} = x^{(i)} + \alpha_i p^{(i)}$
\STATE $r^{(i+1)} = r^{(i)} - \alpha_i q^{(i)}$
\STATE solve $Mz^{(i+1)} = r^{(i+1)}$
\STATE $\rho_{i+1} = r^{(i+1)}{}^Tz^{(i+1)}$
\STATE $\beta_i = \rho_{i+1}/\rho_{i}$
\STATE $p^{(i+1)} = z^{(i+1)} + \beta_i p^{(i)}$
\STATE check convergence; continue if necessary
\ENDFOR
\end{algorithmic}
\end{algorithm}

Based on this scheme, we can construct fault-tolerant iterative methods based on the checkpoint/recovery technique.
We use the preconditioned CG algorithm as an example, as shown in Algorithm 1.
This algorithm is one of the most commonly used iterative methods to solve sparse, symmetric, and positive-definite (SPD) linear systems.
It computes successive approximations to the solution (vector $x^{(i)}$), residuals corresponding to the approximate solutions (vector $r^{(i)}$), and search directions (vector $p^{(i)}$) used to update both the approximate solutions and the residuals.
Each iteration involves one sparse matrix-vector multiplication (line \textcolor{red}{10}), three vector updates (lines \textcolor{red}{12}, \textcolor{red}{13}, and \textcolor{red}{17}), and two vector inner products (lines \textcolor{red}{11} and \textcolor{red}{15}). We refer readers to \cite{barrett1994templates} for more details about CG method.  

For the CG algorithm, the matrix $A$, preconditioner $M$, and right-hand side vector $b$ are static variables.
The number of iterations $i$, the scalar $\rho$, the direction vector $p^{(i)}$, and the approximate solution vector $x^{(i)}$ are dynamic variables.
The residual vector $r^{(i)}$ is the recomputed variable, since we want to reduce checkpoint time and storage consumption.
Based on the checkpoint/recovery model for iterative methods discussed above, we perform checkpointing for $i$, $\rho$, $p^{(i)}$, and $x^{(i)}$ every $ckpt\_intvl$ iterations; 
and we perform recovering for $A$, $M$, $i$, $\rho$, $p^{(i)}$, and $x^{(i)}$ after a failure.

So far, we have constructed a fault-tolerant PCG solver with the checkpoint/recovery technique that has a strong resilience to failure-stop errors.
Based on this scheme, we now can construct the fault-tolerant algorithm for any iterative method as follows.
During the recovery, the first step is to recover a correct computational environment, such as an MPI environment.
It is usually achieved by performing a global restart of the execution.
Without loss of generality, we assume that the correct environment has been already recovered and that the recovered number of processors and tasks is the same as the previous failed one.

During the execution of iterative methods with checkpointing techniques, the overall checkpointing/restart cost is dominated by the dynamic variables instead of static variables. The reason is twofold. On the one hand, static variables are not involved in the checkpointing period but only the recovery step, while the optimal checkpointing frequency is generally considerably higher than the recovery frequency (i.e., failure rate). Suppose the mean time to interruption (MTTI) is 4 hours (i.e., 1 failure per 4 hours affecting the execution) and setting one checkpoint takes 18 seconds. Then the optimal checkpointing frequency is 5 checkpoints per hour according to Young's formula \cite{young}, which is 30 times as large as the failure rate. On the other hand, the static variables generally have comparable sizes with dynamic variables. Specifically, the static variables in the iterative methods are composed of the linear system matrix $A$, the preconditioner $M$, and right-hand side vector $b$. According to SuiteSparse Matrix Collection \cite{ufl} (formerly known as University of Florida Sparse Matrix Collection), the number of nonzeros (i.e., the data that needs to be stored) in matrix $A$ is usually of similar order to or a constant times (e.g., 1x${\sim}$10x) large than the dimension of dynamic vectors. 
For preconditioner $M$, it can be much more sparse than $A$. For example, the most commonly used preconditioning methods---\textit{block Jacobi} and \textit{incomplete LU factorization} (ILU)---need to store only the block diagonal matrix of $A$ and the matrix $L$, $U$ (where $A \approx LU$), respectively.
Therefore, the data size of static variables is usually the same order as or a constant times large than that of dynamic variables.
Taking these factors into account, we see that the overall checkpoint/restart overhead depends mainly on dynamic variables. Thus we focus mainly on reducing the checkpoint/recovery overhead of the dynamic variables in iterative methods by lossy compressors.
Note that when we build the lossy checkpointing performance model (Section \ref{sec:lossy-perf-model}) and perform the evaluation (Section \ref{sec:evaluation}), we take into account all the three types of variables instead of only dynamic variables.

\section{Lossy Checkpointing Scheme for Iterative Methods}
\label{sec:lossy}

In this section, we first analyze the expected overhead of checkpointing techniques for iterative methods. 
We prove that reducing the checkpointing overhead (e.g., by leveraging compression techniques) can significantly improve the overall performance, especially for future HPC systems.
This analysis motivates us to design an approach to reduce the checkpointing overhead.
Then, we propose our lossy checkpointing scheme that can be easily applied to iterative methods in numerical linear algebra.
We also present a new performance model for our lossy checkpointing scheme. Based on the model, we derive an upper bound for the number of extra iterations caused by lossy checkpoints against the reduced checkpointing overhead, to guarantee the performance improvement of the lossy checkpointing scheme.
We theoretically and empirically analyze the impact of lossy checkpointing on the convergence of iterative methods considering multiple types of iterative methods.

\subsection{Theoretical Analysis of Checkpointing Overhead for Iterative Methods}
If a failure happens, we restart the computation from the last checkpointed variables, as shown in Algorithm 2. 
This process is normally called \textit{rollback}. 
Rollback means that some previous computations need to be performed again.
Thus, the checkpointing frequency or time interval needs to be determined carefully.
Here the checkpointing interval means the mean time between two checkpoints.
On the one hand, a larger checkpointing interval means a longer rollback in case of failure, indicating more workload to be recomputed after the recovery;
on the other hand, a smaller checkpointing interval means more frequent checkpointing, leading to higher checkpointing overhead.
How to calculate the optimal checkpointing intervals has been studied for many years \cite{young, agbaria1999starfish}.
Our following analysis is based on the recovery pattern of iterative methods constructed by Langou et al. \cite{langou}. 


\begin{table}[tbp]
\centering
\caption{Notation for traditional checkpointing model}
\label{tab:notation}
\resizebox{\hsize}{!}{%
\begin{tabular}{|l|l|}
\hline
$T_{it}$  & Mean time of an iteration                                                                                       \\ \hline
$T_{ckp}$ & Mean time to perform a checkpoint                                                                             \\ \hline
$T_{rc}$  & \begin{tabular}[c]{@{}l@{}}Mean time to recover the application with the correct \\ environment and data from the last checkpoint\end{tabular}        \\ \hline

$T_{rb}$  & \begin{tabular}[c]{@{}l@{}}Mean time to perform a rollback of some redundant \\  computations\end{tabular}                                                                        \\ \hline
$T_f$     & Mean time to interruption                                                                                     \\ \hline
$T_{overhead}^{CR}$     & Mean time overhead of checkpoint/recovery                                                                                     \\ \hline
$\lambda$ & Failure rate, i.e, 1/$T_f$                                                                                      \\ \hline
$k$       & \begin{tabular}[c]{@{}l@{}}Checkpoint frequency - a checkpoint is performed\\ every $k$ iterations\end{tabular} \\ \hline
$N$       & Number of iterations to converge without failures                                                               \\ \hline
\end{tabular}
}
\end{table}

We use the notation in Table \ref{tab:notation} to analyze the expected fault tolerance overhead.
The overall execution time $T_t$ can be expressed as 
\begin{equation*}
\vspace{-0.5mm}
T_{t} = NT_{it} + T_{ckp}\frac{N}{k} + \frac{T_{t}}{T_{f}}(T_{rc} + T_{rb}).
\vspace{-0.5mm}
\end{equation*}

Without loss of generality, based on Young's formula \cite{young}, the optimal checkpointing interval should be chosen as 
\begin{equation}
\label{eq:young}
\vspace{-0.5mm}
k\cdot T_{it} = \sqrt{2T_{f}\cdot T_{ckp}}, 
\vspace{-0.5mm}
\end{equation}
and the expected mean time to perform a roll back, namely, $T_{rb}$, is $kT_{it}/2$. 
Thus,
\begin{align*}
\vspace{-1mm}
T_{t} &= NT_{it} + T_{ckp}\frac{T_{t}}{\sqrt{2T_{f}\cdot T_{ckp}}} + \frac{T_{t}}{T_{f}}(T_{rc} + \frac{\sqrt{2T_{f}\cdot T_{ckp}}}{2}) \\
	&= NT_{it} + T_{t}(\sqrt{\frac{2T_{ckp}}{T_f}} + \frac{T_{rc}}{T_f}) = NT_{it} + T_{t}(\sqrt{2\lambda T_{ckp}} + \lambda T_{rc}).
\vspace{-0.5mm}
\end{align*}

Similar to \cite{langou}, we therefore can get the expected overall execution time as 
\begin{equation}
\vspace{-0.5mm}
T_{t} = \frac{NT_{it}}{1 - \sqrt{2\lambda T_{ckp}} - \lambda T_{rc}},
\vspace{-0.3mm}
\end{equation}
and the fault tolerance overhead is 
\begin{equation}
\vspace{-0.3mm}
T_{overhead}^{CR} = T_{t} - NT_{it} = NT_{it}\cdot \frac{\sqrt{2\lambda T_{ckp}} + \lambda T_{rc}}{1 - \sqrt{2\lambda T_{ckp}} - \lambda T_{rc}},
\vspace{-0.3mm}
\end{equation}
where $NT_{it}$ is the basic productive execution time with $N$ iterations to converge. Note that in the paper, we use \textit{fault tolerance overhead} to refer to the performance overhead caused by checkpoints/recoveries and failure events, which is equal to the total running time taking away the basic productive execution time (i.e., $T_t$$-$$N$$T_{it}$).

We assume $T_{rc} \approx T_{ckp}$ without loss of generality. Then we can simplify the expected fault tolerance overhead as follows.
\begin{equation}
\label{eq:lossless-overhead}
\vspace{-0.5mm}
T_{overhead}^{CR} \approx NT_{it}\cdot \frac{\sqrt{2\lambda T_{ckp}} + \lambda T_{ckp}}{1 - \sqrt{2\lambda T_{ckp}} - \lambda T_{ckp}}
\vspace{-0.5mm}
\end{equation}
Moreover, we can calculate the ratio of the expected fault tolerance overhead to the basic productive execution time as Equation (\ref{eq:Toverhead}).
\begin{equation}
\label{eq:Toverhead}
\frac{T_{overhead}^{CR}}{NT_{it}} = \frac{\sqrt{2\lambda T_{ckp}} + \lambda T_{ckp}}{1 - \sqrt{2\lambda T_{ckp}} - \lambda T_{ckp}}
\end{equation}

\begin{figure}[t]
\centering
\includegraphics[scale=0.30]{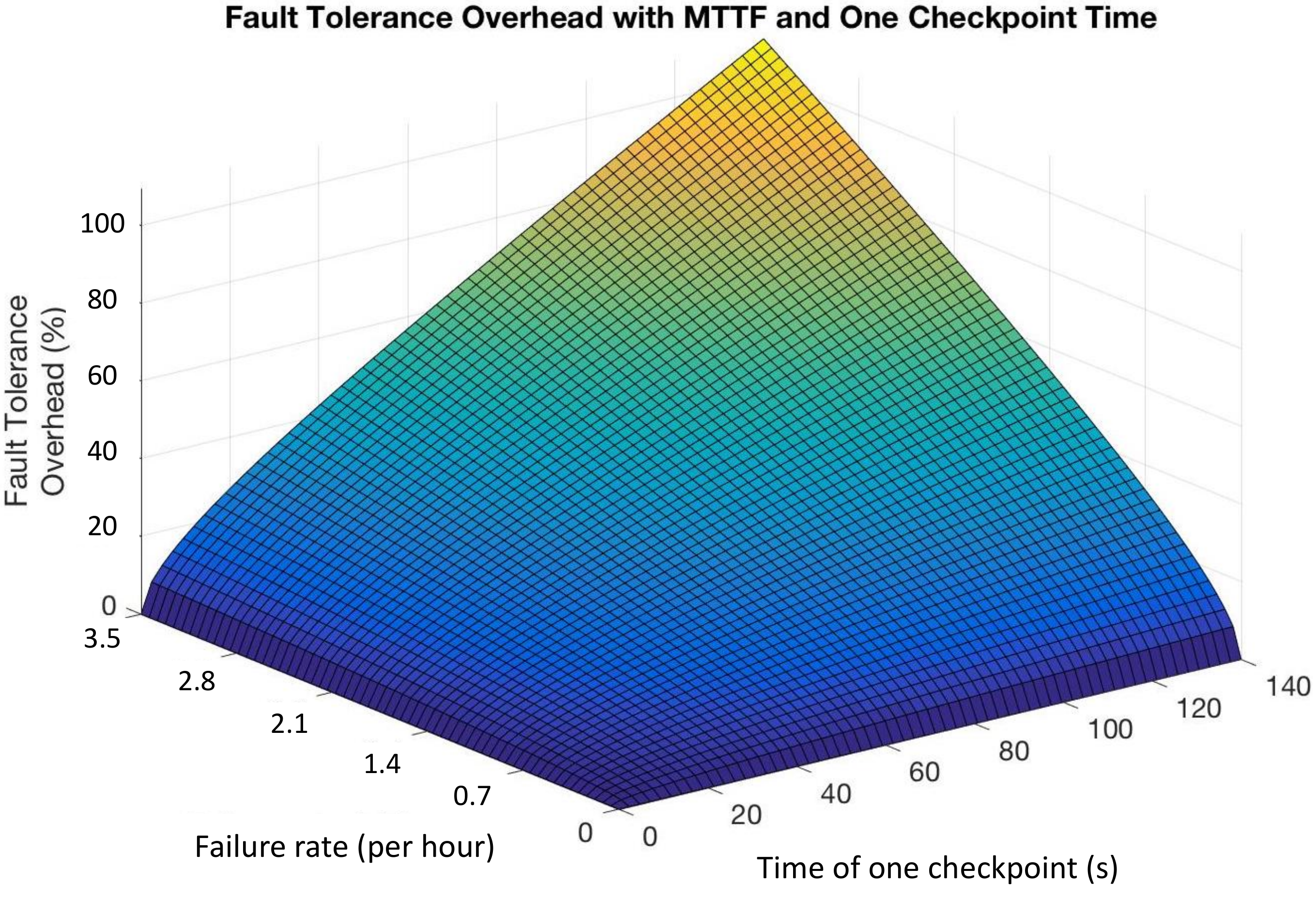}
\caption{Expected fault tolerance overhead with different failure rates and checkpoint time.}
\label{fig:f1}
\end{figure}

Now the expected fault tolerance overhead depends only on the failure rate $\lambda$ and time of one checkpoint $T_{ckp}$.
Based on this formula, we can plot the expected overhead of checkpoint/recovery based on different $\lambda$ and $T_{ckp}$, as shown in Figure \ref{fig:f1}.
We choose $\lambda$ from $0$ to $3.5$ failures per hour (i.e., MTTI from about $20$ minutes to infinity) and $T_{ckp}$ from $0$ to $140$ seconds. Note that the MTTI represents the expected period at which the application execution is interrupted.
Based on our experimental evaluation, checkpointing one dynamic vector $x$ once without compression takes about $120$ seconds with $2,048$ processes/cores on the Bebop cluster \cite{bebop} at Argonne National Laboratory.
In our experiment, the number of elements in the vector is set to $10^{10}$ (with $78.8$ GB double-precision floating-point data), which is the largest problem size that the three iterative methods (Jacobi, GMRES, and CG) can be run on the Bebop using 2,048 cores. 
We adopt the FTI library \cite{fti} with MPI-IO for checkpointing because of its high I/O efficiency confirmed in recent studies \cite{mpiio}. More details are presented in the experimental evaluation section.

Figure \ref{fig:f1} illustrates that the expected fault tolerance overhead can be as high as $40\%$ with $T_{ckp} = 120s$ if the MTTI is about hourly.
On future extreme-scale systems with millions of components, the failure rate may be higher, and the fault tolerance overhead issue could be more severe.
From Figure \ref{fig:f1}, we see that reducing the checkpointing time can significantly improve the overall performance of checkpoint/restart, especially under a higher error rate scenario.

\subsection{Lossy Checkpointing Scheme for Iterative Methods}
\label{sec:lossy-chkpt-scheme}
Our  lossy checkpointing scheme based on an iterative method has two key steps.
\begin{itemize}
\item Compress dynamic variables with lossy compressor before each checkpointing.
\item Decompress compressed dynamic variables after each recovering.
\end{itemize}

\begin{algorithm}
\caption{Fault-tolerant preconditioned conjugate gradient algorithm with lossy checkpointing technique}
\begin{flushleft}
\textbf{Input:} linear system matrix $A$, preconditioner $M$, and right-hand side vector $b$\\
\textbf{Output:} approximate solution $x$
\end{flushleft}
\begin{algorithmic}[1]
\STATE Initialization: same as line 1 in Algorithm 1
\FOR{$i = 0, 1, \cdots$}
\IF {(($i > 0$) and ($i\%ckpt\_intvl = 0$))}
\STATE \textbf{Compress: $x^{(i)}$ with lossy compressor}
\STATE \textit{Checkpoint: $i$ and compressed $x^{(i)}$}
\ENDIF
\IF {(($i > 0$) and (recover))}
\STATE \textit{Recover: $A, M, i$ and compressed $x^{(i)}$}
\STATE \textbf{Decompress:  $x^{(i)}$ with lossy compressor}
\STATE Compute $r^{(i)} = b - Ax^{(i)}$
\STATE Solve $Mz^{(i)} = r^{(i)}$
\STATE $p^{(i)} = z^{(i)}$
\STATE $\rho_i = {r^{(i)}}^Tz^{(i)}$
\ENDIF
\STATE Computation: same as lines 10--17 in Algorithm 1
\ENDFOR
\end{algorithmic}
\end{algorithm}

We still use the CG algorithm as an example, as shown in Algorithm 2, and the lossy checkpointing scheme can be applied to other iterative methods similarly.
Because of space limitations, we present only the lossy checkpointing part without the original computations in Algorithm 2. 
The lossy compression and decompression procedures are marked in bold.
We note that the CG algorithm maintains a series of orthogonality relations between the residual vectors $r$ and the direction vectors $p$.
Specifically, (1) $p^{(k)}$ and $Aq^{(j)}$, (2) $r^{(k)}$ and $p^{(j)}$, and (3) $r^{(k)}$ and $r^{(j)}$ are orthogonal to each other, where $j < k$.
However, these orthogonality relations may be broken after a recovery because of the errors introduced by lossy compression.
Unfortunately, the convergence rate of the CG algorithm (which is superlinear) is highly dependent on these orthogonality relations.
Hence, after a recovery from lossy checkpointing, the CG algorithm may lose the superlinear convergence rate, leading to a slow convergence \cite{sao2013self}.
To avoid this situation, we adopt a restarted scheme for the CG algorithm (\emph{restarted CG}) \cite{powell1977restart},
in which the computed approximate solution $x_i$ is periodically treated as a new guess. 
In this case, we need to checkpoint only the vector $x_i$ during the execution. The decompressed $x_i$ is used as a new initial guess, and a new series of orthogonal vectors is reconstructed for the execution, such that a superlinear convergence rate can be rebuilt after restart.

Some studies of iterative methods have proved that such a restarted version of iterative methods (i.e., restarting by treating the current approximate solution vector as a new initial guess) may bring important advantages \cite{powell1977restart,agullo2013towards,gmres}. 
On the one hand, it suffers from less time and space complexity compared to their classic counterparts. For example, in practice, GMRES  is often used to be restarted every a number of iterations (denoted by $k$) with the vector $x_k$ as a new initial guess; and it is often denoted by GMRES($k$).
Without the periodically restarting feature, the total time and space complexity of GMRES will both grow with an increasing rate of $N^2$ over the time step $N$.
By contrast, the time and space complexity of GMRES($k$) will be limited under a constant cost over the execution. 
On the other hand, some studies \cite{agullo2013towards,gmres} have proved that the \emph{restarted scheme} may not delay the convergence but even accelerate it, in that the periodically refreshed settings may enable the convergence to jump out of local search of the solution. 
In Section \ref{sec:impact}, we present more details regarding CG and GMRES with lossy checkpointing.
In the following discussion, we always use CG and GMRES to denote the restarted CG and GMRES, respectively, in the context of lossy checkpointing. For these restarted iterative methods, the only dynamic variable we need to checkpoint is the approximate solution vector $x$. 
As shown in Section \ref{sec:checkpoint}, however, even checkpointing one dynamic vector will still lead to a severe performance issue for current or future HPC systems.

Users can follow the below workflow to easily integrate our lossy checkpointing for iterative methods with existing HPC applications: (1) initialize application; (2) \textit{register variables external to the solver to checkpoint}; (3) start application's computations/iterations; (4) enter the solver's library; (5) \textit{register the solver's variables to checkpoint in the library}; (6) iterate the solver; (7) \textit{save or restore the application and solver's variables}; (8) continue to iterate the solver; (9) exit the solver's library; (10) continue application's computations/iterations (if needed); (11) exit application. Specifically, users can use the APIs, \texttt{Protect()} and \texttt{Snapshot()}, provided by our lossy checkpointing library to register and save/restore variables, as discussed in (2), (5), and (7).

\subsection{Performance Model of Lossy Checkpointing}
\label{sec:lossy-perf-model}

In this subsection, we build a performance model for the lossy checkpointing scheme, which is fundamental for analyzing the lossy checkpointing performance theoretically. Based on this performance model, we further derive a sufficient condition, an upper bound of the extra number of iterations caused by lossy data (i.e., Equation (\ref{eq:nbound})), for guaranteeing the performance improvement of the lossy checkpointing scheme. 
Building the performance model requires a few more parameters, as listed in Table 2.

\begin{table}[h]
\centering
\caption{Notations used in the lossy checkpointing performance model}
\resizebox{\hsize}{!}{%
\begin{tabular}{|l|l|}
\hline
$T_{comp}$        & Mean time of performing lossy compression                                                         \\ \hline
$T_{decomp}$      & Mean time of performing lossy decompression                                                       \\ \hline
$T_{ckp}^{trad}$  & Mean time of performing one traditional checkpoint                                                \\ \hline
$T_{ckp}^{lossy}$              & Mean time of performing a lossy checkpointing                                                     \\ \hline
$T_{overhead}^{lossyCR}$              & Time overhead of performing lossy checkpoint/recovery                                                     \\ \hline
$N'$              & Mean number of extra iterations caused by per lossy recovery                                                              \\ \hline
\end{tabular}
}
\end{table}

Since lossy compression introduces errors in the reconstructed dynamic variable(s), the solver may suffer from a delay to converge.
Suppose one recovery will cause extra $N'$ iterations to the convergence on average, then the total execution time can be rewritten as
\begin{equation}
\label{eq:lossy-ckpt-model}
T_{t} = NT_{it} + T_{ckp}^{lossy}\frac{N}{k} + \frac{T_{t}}{T_{f}}(N' T_{it} + T_{rc}^{lossy} + T_{rb}),
\end{equation}
because  lossy checkpointing needs to perform one decompression during each recovery,  lossy checkpointing needs to perform one compression during each checkpoint,
and each recovery will delay $N'$ iterations on average.
Note that $T_{ckp}^{lossy}$ and $T_{rc}^{lossy}$ include the compression time $T_{comp}$ and decompression time $T_{decomp}$, respectively. 
According to \cite{sz17, sz16, zfp}, $T_{comp}$ and $T_{decomp}$ are usually stable for a fixed compression accuracy.

Although the checkpointing/restarting time may differ across various iterations because of different data sizes due to various compression ratios, most  well-known iterative methods can converge quickly such that the value of each element in the approximate solution changes little in the following execution. Hence, the checkpointing data and its size will not change dramatically after several initial iterations, and $T_{ckp}^{lossy}$ and $T_{rc}^{lossy}$ can be assumed to be independent of iterations without loss of generality.

Similar to Section \ref{sec:checkpoint}, we can derive the expected execution time with lossy checkpointing as
\begin{equation*}
\vspace{-2mm}
T_{t} = \frac{NT_{it}}{1 - \sqrt{2\lambda T_{ckp}^{lossy}} - \lambda T_{rc}^{lossy} - \lambda N' T_{it}}
\end{equation*}
and the expected performance overhead of lossy checkpointing as
\begin{equation}
\label{eq:lossy-overhead-ori}
T_{overhead}^{lossyCR} = NT_{it}\cdot \frac{\sqrt{2\lambda T_{ckp}^{lossy}} + \lambda T_{rc}^{lossy} + \lambda N' T_{it}}{1 - \sqrt{2\lambda T_{ckp}^{lossy}} - \lambda T_{rc}^{lossy} - \lambda N' T_{it}}.
\end{equation}
Similarly, we can use $T_{ckp}^{lossy}$ to approximate $T_{rc}^{lossy}$ and simplify the performance overhead formula to 
\begin{equation}
\label{eq:lossy-overhead}
T_{overhead}^{lossyCR} \approx NT_{it}\cdot \frac{\sqrt{2\lambda T_{ckp}^{lossy}} + \lambda T_{ckp}^{lossy} + \lambda N' T_{it}}{1 - \sqrt{2\lambda T_{ckp}^{lossy}} - \lambda T_{ckp}^{lossy} - \lambda N' T_{it}}.
\end{equation}

Now, we can derive a sufficient condition for iterative methods such that the lossy checkpointing scheme  with a lossy compressor is able to obtain a performance gain over the traditional checkpointing scheme without lossy compression techniques.
\begin{theorem}
Denote $\lambda$ and $T_{it}$ by the expected failure rate and expected execution time of an iteration, respectively.
The lossy checkpointing scheme will improve the execution performance for an iterative method as long as the following inequality holds. 
\begin{equation}
\label{eq:nbound}
\begin{array}{l}
N' \leq 
(f(T_{ckp}^{trad},\lambda) - f(T_{ckp}^{trad},\lambda)) / (\lambda T_{it}),\\
where \hspace{1mm} f(t, \lambda) = \sqrt{2\lambda t} + \lambda t
\end{array}
\end{equation}
\end{theorem}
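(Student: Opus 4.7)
The plan is to compare the two fault-tolerance overheads side by side: the traditional one given in Equation (\ref{eq:lossless-overhead}) and the lossy one given in Equation (\ref{eq:lossy-overhead}). Both have the same structural shape $NT_{it}\cdot \tfrac{x}{1-x}$, where $x$ collects the overhead contributions from checkpointing, recovery, and (in the lossy case) the extra iterations. The condition ``lossy is at least as fast as traditional'' is simply $T_{overhead}^{lossyCR}\le T_{overhead}^{CR}$, so the first step is to reduce this to a comparison of the two numerators.

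Concretely, I would argue that the map $g(x)=x/(1-x)$ is strictly increasing on $[0,1)$ (easy derivative check). Assuming the regime of interest where both denominators are positive (which is implicit in the derivation of Young's formula in Section~4.1, since otherwise the execution time diverges), comparing $g$ of the two numerators is equivalent to comparing the numerators themselves. Substituting $f(t,\lambda)=\sqrt{2\lambda t}+\lambda t$, the lossy numerator is $f(T_{ckp}^{lossy},\lambda)+\lambda N' T_{it}$ and the traditional numerator is $f(T_{ckp}^{trad},\lambda)$. Requiring the first to be no larger than the second and solving for $N'$ yields exactly
\begin{equation*}
N' \le \frac{f(T_{ckp}^{trad},\lambda)-f(T_{ckp}^{lossy},\lambda)}{\lambda T_{it}},
\end{equation*}
which matches the claimed bound (the two arguments in the statement are clearly a typo: the first $f$ must be evaluated at $T_{ckp}^{trad}$ and the second at $T_{ckp}^{lossy}$, otherwise the right-hand side collapses to zero).

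There is essentially no computational obstacle here; the only substantive thing to justify is the monotonicity of $g$ and the positivity of the two denominators, which is a standard validity assumption for the Young-style optimum used throughout Section~4.1. I would therefore state both assumptions explicitly at the start of the proof, invoke the monotonicity lemma as a one-line observation, and then let the algebra fall out as above. The place where a careful reader might object is whether $T_{ckp}^{lossy}$ and $T_{rc}^{lossy}$ can legitimately be collapsed into a single symbol as in Equation (\ref{eq:lossy-overhead}); I would point back to the simplifying assumption $T_{rc}\approx T_{ckp}$ already adopted in Section~4.1 and note that the same approximation is being applied consistently to both the traditional and lossy sides, so it does not bias the comparison. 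Given that symmetry, the upper bound on $N'$ is a necessary and sufficient condition, not merely sufficient, which slightly strengthens the stated result.
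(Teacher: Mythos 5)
Your proof follows essentially the same route as the paper's: set $T_{overhead}^{lossyCR}\le T_{overhead}^{CR}$, cancel the common factor $NT_{it}$, reduce the fraction comparison to a comparison of numerators, and solve for $N'$. The paper compresses the middle step into ``further simplifying this inequality,'' whereas you make explicit the required lemma (monotonicity of $x\mapsto x/(1-x)$ together with positivity of both denominators) that justifies passing from the fractional inequality to the linear one; that is a worthwhile clarification but not a different approach. You also correctly flag the typo in Equation (\ref{eq:nbound})---the second argument must be $T_{ckp}^{lossy}$, not $T_{ckp}^{trad}$---and your closing observation that, under the shared approximation $T_{rc}\approx T_{ckp}$, the bound is in fact necessary as well as sufficient is accurate and slightly sharpens the stated result.
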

\begin{proof}
To have the lossy checkpointing overhead be lower than that of  traditional checkpointing, we make Equation (\ref{eq:lossy-overhead}) smaller than Equation (\ref{eq:lossless-overhead}):
\begin{equation*}
\resizebox{.9\hsize}{!}{%
$\frac{\sqrt{2\lambda T_{ckp}^{lossy}} + \lambda T_{ckp}^{lossy} + \lambda N' T_{it}}{1 - \sqrt{2\lambda T_{ckp}^{lossy}} - \lambda T_{ckp}^{lossy} - \lambda N' T_{it}} \leq
\frac{\sqrt{2\lambda T_{ckp}^{trad}} + \lambda T_{ckp}^{trad}}{1 - \sqrt{2\lambda T_{ckp}^{trad}} - \lambda T_{ckp}^{trad}}$.
}
\end{equation*}

Further simplifying this inequality, we can get the following formula with respect to the upper bound of $N'$.
\begin{equation*}
\resizebox{.76\hsize}{!}{%
$N' \leq \frac{(\sqrt{2\lambda T_{ckp}^{trad}} + \lambda T_{ckp}^{trad}) - (\sqrt{2\lambda T_{ckp}^{lossy}} + \lambda T_{ckp}^{lossy})}{\lambda T_{it}}$
}
\end{equation*}
Rewriting this inequality with $f(t, \lambda) = \sqrt{2\lambda t} + \lambda t$ will lead to Equation (\ref{eq:nbound}).
\end{proof}


We give an example to explain how to use Theorem 1 in practice. 
Based on our experiments running GMRES on the Bebop cluster with 2,048 cores, we noted that the lossy compression technique can reduce the checkpointing time $T_{ckp}$ from $120$ seconds to $25$ seconds for GMRES with a checkpoint of about $80$ GB data (details are presented later in Figure \ref{fig:f3}). 
Suppose the MTTI of a system is one hour (i.e., $\lambda = 1/3600$) and that the GMRES algorithm requires $5,875$ iterations with a total of $7,160$ seconds to converge. Then the mean time of one iteration, namely, $T_{it}$, is about $1.2$ seconds.
We can derive the maximum acceptable number of extra iterations to be 500 based on Equation (\ref{eq:nbound}). Hence,  using a lossy checkpointing scheme is worthwhile if one recovery (with compression error introduced by lossy checkpointing) causes extra $500$ or fewer iterations (about 9\% of total iterations) to converge,

\subsection{Impact Analysis of Lossy Checkpointing on Iterative Methods}
\label{sec:impact}

In this subsection, we analyze the impact of lossy checkpointing on iterative methods, including stationary iterative methods, GMRES, and CG.
Based on our analysis, we conclude that our lossy checkpointing technique can be applied to most of the iterative methods in numerical linear algebra for reducing the fault tolerance overhead.

\subsubsection{Stationary Iterative Methods}
\label{sec:stat-bound}
We analyze the impact of lossy checkpointing on the convergence of four representative iterative methods: Jacobi, Gauss-Seidel, successive overrelaxation, and symmetric successive overrelaxation.
The stationary iterative methods can be expressed in the following simple form,
\begin{equation*}
x^{(i)} = G x^{(i-1)} + c,
\end{equation*}
where $G$ and $c$ are a constant matrix and a vector, respectively.

Let $R$ denote the spectral radius of matrix $G$, which is the largest eigenvalue of the matrix $G$.
The convergence rate of a stationary iterative method is determined by its value.
Specifically, let $x^*$  denote the exact solution of the linear system:
\begin{equation*}
||x^{(i)} - x^* || \approx R^i \cdot || x^{(0)} - x^*||.
\end{equation*}
Since the initial guess $x^{(0)}$ could be any vector and it is set to $0$ in general, we have 
\begin{equation}
\label{eq:jacobi-rate}
||x^{(i)} - x^* || \approx R^i \cdot || x^*||.
\end{equation}

Suppose the stationary methods encounter a failure and restart at the $t^{th}$ iteration, and we denote that the lossy compression introduces an error vector $e$ to $x^{(t)}$ by following relative error bound without loss of generality. Here the relative error bound means $|x^{(t)}_i - x'^{(t)}_i|\leq eb \cdot |x^{(t)}_i|$ for $1\leq i \leq n$, where $eb$ is the relative error bound, $x^{(t)}_i$ is the $i$th element of vector $x^{(t)}$, and $n$ is the vector length. 
The computation will start from $x^{(t)}+e$ (denoted by $x'^{(t)}$) instead of $x^{(t)}$.
We derive the following theorem to obtain the range of the \textbf{expected upper bound of the number of extra iterations} for the stationary iterative methods.
\begin{theorem}
Based on the convergence rate (Equation (\ref{eq:jacobi-rate})), the expected upper bound of the number of extra iterations for the stationary iterative methods falls into the interval $[\mathit{\frac{N+1}{2}-\log_R(R^{\frac{N+1}{2}}+eb)}$, $\mathit{N-\log_R(R^N+eb)}]$, where $eb$ is a constant relative error bound and $R$ and $N$ remain the same definitions as in the earlier discussion.
\end{theorem}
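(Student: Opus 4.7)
The plan is to combine the stationary-iteration convergence relation (\ref{eq:jacobi-rate}) with a triangle-inequality bound on the lossy reconstruction, obtain a closed-form expression for the number of extra iterations as a function of the recovery point $t$, and then read off the two endpoints of the claimed interval from that function.

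First, I would quantify the damage of a single lossy recovery. Suppose a failure occurs just after iteration $t$ and the reconstructed iterate is $x'^{(t)}=x^{(t)}+e$. The componentwise relative bound $|e_i|\leq eb\cdot|x^{(t)}_i|$ gives $\|e\|\leq eb\,\|x^{(t)}\|$, and because the method has already made progress toward the solution, $\|x^{(t)}\|\approx\|x^*\|$. Combining the triangle inequality with (\ref{eq:jacobi-rate}) yields
\[
\|x'^{(t)}-x^*\|\;\leq\;\|x^{(t)}-x^*\|+\|e\|\;\lesssim\;\bigl(R^{t}+eb\bigr)\|x^*\|.
\]
Next, I would count how many additional iterations $m$ are required, after the restart, to drive the error back down to the accuracy $R^{N}\|x^*\|$ that the failure-free run reaches at iteration $N$. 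Applying (\ref{eq:jacobi-rate}) from $x'^{(t)}$ gives $R^{m}(R^{t}+eb)\|x^*\|\leq R^{N}\|x^*\|$, and because $R<1$ applying $\log_{R}$ reverses the inequality and produces $m\geq N-\log_{R}(R^{t}+eb)$. Since without failure the algorithm would have performed only $N-t$ more iterations beyond step $t$, the number of extra iterations caused by the lossy checkpoint is
\[
E(t)\;=\;m-(N-t)\;=\;t-\log_{R}\bigl(R^{t}+eb\bigr).
\]

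Finally, I would exhibit the interval by analyzing $E(t)$. Direct differentiation gives $E'(t)=eb/(R^{t}+eb)>0$, so $E$ is monotonically increasing in $t$. Evaluating at the worst-case recovery point $t=N$ produces the upper endpoint $N-\log_{R}(R^{N}+eb)$, while evaluating at the expected recovery point $t=(N+1)/2$ under a uniform failure distribution across the $N$ iterations produces the lower endpoint $\tfrac{N+1}{2}-\log_{R}(R^{(N+1)/2}+eb)$, yielding the stated interval. The main obstacle I expect is bookkeeping rather than insight: the logarithm with base $R<1$ is easy to get the wrong sign on, and the approximation $\|x^{(t)}\|\approx\|x^*\|$ must be justified (or absorbed into the looser bound $\|x^{(t)}\|\leq(1+R^{t})\|x^*\|$) so that $eb$ appears cleanly inside $\log_{R}$ without extra constants that would perturb both endpoints of the interval.
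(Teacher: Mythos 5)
Your derivation of the per-recovery cost $E(t) = t - \log_R(R^t + eb)$ follows the paper's proof essentially step by step: triangle inequality to bound $\|x'^{(t)}-x^*\|$, the approximation $\|x^{(t)}\|\approx\|x^*\|$, and solving for the smallest $m$ that restores the pre-failure accuracy. Your bookkeeping (measuring $m$ against the final target $R^N\|x^*\|$ and subtracting $N-t$) is cosmetically different from the paper's (measuring $m$ against $\|x^{(t)}-x^*\|$ directly), but both give the same $E(t)$.

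The gap is in the final step. Monotonicity of $E$, which you verify via $E'(t)=eb/(R^t+eb)>0$, correctly justifies the upper endpoint $E(N)$. But you merely assert that evaluating $E$ at the mean recovery iteration $(N+1)/2$ yields the lower endpoint of the interval for the \emph{expected} number of extra iterations; to conclude that the expectation of $E(t)$ over a uniformly random $t$ is at least $E\bigl((N+1)/2\bigr)$, you need $E$ to be convex and then Jensen's inequality. Convexity does hold --- $E''(t) = -eb\,R^t\ln R/(R^t+eb)^2 > 0$ since $0<R<1$ implies $\ln R < 0$ --- and the paper explicitly names monotonicity, convexity, and Jensen as the ingredients for this last step (while omitting the computation for space). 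Your plan supplies the monotonicity half but is missing the convexity check and the Jensen step that make the lower endpoint a genuine lower bound rather than merely the value at a representative point.
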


\begin{proof}
Based on the definition of the relative error bound $eb$, we have $||e|| \leq eb \cdot ||x^{(t)}||$. 
Then, we can get 

\begin{align}
\vspace{-1mm}
\label{eq:jacobi-1}
||x'^{(t)} - x^* || & = ||x^{(t)} + e - x^*|| \leq ||x^{(t)} - x^*|| + ||e|| \\
& \leq R^{t} \cdot ||x^*|| + eb \cdot ||x^{(t)}|| .\nonumber
\end{align}
After another $m$ iterations from erroneous vector $x'^{(t)}$, we have
\begin{equation*}
||x^{(t+m)} - x^*|| \approx R^m \cdot ||x'^{(t)} - x^*|| .
\end{equation*}
Then, based on Equation (\ref{eq:jacobi-1}), we derive the following inequality.
\begin{equation}
\label{eq:jacobi-2}
||x^{(t+m)} - x^*|| \leq R^m (R^{t} \cdot ||x^*|| + eb \cdot ||x^{(t)}||) 
\end{equation} 

Let us consider how to choose an $m$ to ensure $||x^{(t+m)} - x^*|| \leq ||x^{(t)} - x^*||$, so that the residual norm between the approximate solution and exact solution will return to the previous value after  $m$ steps.
Based on Equation 
(\ref{eq:jacobi-2}), if we assure 
\begin{equation}
\label{eq:jacobi-3}
R^m (R^{t} \cdot ||x^*|| + eb \cdot ||x^{(t)}||)  \leq ||x^{(t)} - x^*||,
\end{equation}
then $||x^{(t+m)} - x^*|| \leq ||x^{(t)} - x^*||$ will hold.
Also, based on Equation (\ref{eq:jacobi-rate}), $||x^{(t)} - x^*|| \approx R^t ||x^*||$, Equation (\ref{eq:jacobi-3}) is equivalent to 
\begin{equation*}
R^m (R^{t} \cdot ||x^*|| + eb \cdot ||x^{(t)}||)  \leq R^t ||x^*||.
\end{equation*}
Therefore, $ m \geq \log_{R}\frac{R^t \cdot ||x^*||}{ (R^{t} \cdot ||x^*|| + eb \cdot ||x_{t}||) }$.


Without loss of generality, $||x^{(t)}||$ is likely close to $||x^*||$ after running with a few initial iterations, so we have the following approximation:
\begin{align*}
\log_{R}\frac{R^t \cdot ||x^*||}{ (R^{t} \cdot ||x^*|| + eb \cdot ||x^{(t)}||) } & \approx \log_{R}\frac{R^t}{ (R^{t} + eb) }.
\end{align*}
As a result of these inequalities, as long as $m \geq \log_{R}\frac{R^t}{ (R^{t} + eb) }$, we will assure $||x^{(t+m)} - x^*|| \leq ||x^{(t)} - x^*||$.
In other words, the stationary iterative methods need to take extra $\log_{R}\frac{R^t}{ (R^{t} + eb)}$ iterations \textit{at most} for convergence to the same accuracy.
To conclude, if the stationary methods restart at the $t$th iteration with relative error bound $eb$, the upper bound of extra iterations $N'$ is $t - \log_R(R^t + eb)$.

We now can calculate the range of the expected upper bound of extra iterations for the stationary iterative methods based on the monotonicity and convexity of $t - \log_R(R^t + eb)$ and Jensen inequality. Because of space limitations, we omit the details here.
\end{proof}



\subsubsection{GMRES}
\label{sec:gmres-bound}
The generalized minimum residual method proposed by Saad and Schultsz \cite{gmres} is a Krylov subspace method for solving a large, sparse linear system with no constraint on the coefficient system matrix, especially for solving nonsymmetric systems.
The method minimizes the norm of residual vector over a Krylov subspace at every iteration.
Considering the cost growth of GMRES, it is often executed with the restarting scheme.
In the following discussion, we use GMRES and restarted GMRES interchangeably.
Although GMRES has a good ability to resist silent data corruption \cite{ftgmres}, protecting GMRES against fail-stop errors still has to rely on a checkpointing technique.

Unlike the stationary iterative methods, analyzing the extra convergence steps for nonstationary methods is difficult in theory.
However, we propose an adaptive scheme to determine the error bound for GMRES with lossy checkpointing as follows.
\begin{theorem}
\label{th:gmres}
For the GMRES method, after a restart with lossy checkpointing, the new residual norm is controlled close to or at least on the same order as the previous residual if the relative error bound $eb$ is set to O($||r^{(t)}||/||b||$).
\end{theorem}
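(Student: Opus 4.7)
The plan is to propagate the decompression perturbation through the residual map and then to choose $eb$ so that the induced perturbation does not inflate the relative residual that restarted GMRES monitors. Writing the decompressed iterate as $x'^{(t)}=x^{(t)}+e$, I would first use linearity to obtain $r'^{(t)}=b-Ax'^{(t)}=r^{(t)}-Ae$, and apply the triangle inequality to get $\|r'^{(t)}\|\le \|r^{(t)}\|+\|A\|\cdot\|e\|$. This reduces the claim to controlling $\|A\|\cdot\|e\|$ against $\|r^{(t)}\|$.

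Next I would substitute the relative-error-bound assumption $\|e\|\le eb\cdot\|x^{(t)}\|$ introduced in the discussion of Theorem 2, which yields $\|r'^{(t)}\|\le \|r^{(t)}\|+eb\cdot\|A\|\cdot\|x^{(t)}\|$. Dividing through by $\|b\|$ rewrites the inequality in terms of the relative residual that GMRES actually tracks: $\|r'^{(t)}\|/\|b\|\le \|r^{(t)}\|/\|b\|+eb\cdot(\|A\|\cdot\|x^{(t)}\|/\|b\|)$. Plugging in the prescribed choice $eb=O(\|r^{(t)}\|/\|b\|)$ then shows that the perturbation of the relative residual is of the same order as $\|r^{(t)}\|/\|b\|$, provided the trailing factor $\|A\|\cdot\|x^{(t)}\|/\|b\|$ is $O(1)$.

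To justify that this trailing factor is bounded by a moderate constant, I would appeal to the fact that once the iteration has entered its convergence phase, $x^{(t)}$ is close to the exact solution $x^{\ast}=A^{-1}b$, so $\|x^{(t)}\|\approx \|A^{-1}b\|\le \|A^{-1}\|\cdot\|b\|$. This delivers $\|A\|\cdot\|x^{(t)}\|/\|b\|\le \kappa(A)(1+o(1))$, a modest constant for the well-conditioned systems targeted by the paper that can be absorbed into the asymptotic $O(\cdot)$ in the theorem statement. Combining the pieces yields $\|r'^{(t)}\|=O(\|r^{(t)}\|)$, which is precisely the claim that the post-restart residual is on the same order as the pre-failure residual.

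The main obstacle, and the reason the theorem is phrased only as \emph{on the same order}, is exactly this hidden condition-number dependence: a fully quantitative bound would require tracking $\kappa(A)$ (or the norm product $\|A\|\cdot\|A^{-1}\|$) explicitly. A cleaner way to hide this constant is to emphasize the \emph{adaptive} nature of the scheme---since $eb$ is retuned at each checkpoint to the current relative residual, the additive noise $eb\cdot\|A\|\cdot\|x^{(t)}\|$ automatically shrinks together with $\|r^{(t)}\|$ throughout the run, so the ``same-order'' guarantee is preserved across successive restarts rather than only at a single iteration. I expect the authors' argument to follow the same algebraic chain and to leave the $\kappa(A)$ factor absorbed in the $O(\cdot)$.
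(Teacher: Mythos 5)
Your argument is correct and reaches the same conclusion, but it takes a genuinely different route from the paper's, and your closing prediction that the authors also absorb a $\kappa(A)$ factor is off. You bound the residual perturbation by the operator-norm chain $\|Ae\|\le\|A\|\,\|e\|\le eb\,\|A\|\,\|x^{(t)}\|$, then control $\|A\|\,\|x^{(t)}\|/\|b\|$ by $\kappa(A)$ via $\|x^{(t)}\|\approx\|A^{-1}b\|$; the condition number then survives into your final $O(\cdot)$. The paper instead jumps from $\|Ae\|$ to $eb\,\|Ax^{(t)}\|$ directly (a step modeled on the element-wise relative bound $|e_i|\le eb\,|x^{(t)}_i|$ rather than on submultiplicativity, and strictly speaking only an approximation since $\sum_k|A_{jk}||x^{(t)}_k|\ne|(Ax^{(t)})_j|$ in general), and then uses the exact identity $Ax^{(t)}=b-r^{(t)}$ to obtain $\|r'^{(t)}\|\lesssim(1+eb)\|r^{(t)}\|+eb\,\|b\|$. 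The advantage of the paper's route is that $\|A\|$, $\|A^{-1}\|$, and $\kappa(A)$ never enter: the bound depends only on quantities GMRES already tracks ($\|r^{(t)}\|$ and $\|b\|$), so the $O(\cdot)$ constant is universal and the adaptive choice $eb=O(\|r^{(t)}\|/\|b\|)$ is self-contained. The advantage of your route is that the step $\|Ae\|\le\|A\|\,\|e\|$ is fully rigorous where the paper's $\|Ae\|\le eb\,\|Ax^{(t)}\|$ is heuristic; you pay for that rigor with an extra $\kappa(A)$ that must be assumed moderate. Both proofs are acceptable, but they trade rigor and condition-number dependence in opposite directions, and it is worth noting that the paper's use of $Ax^{(t)}=b-r^{(t)}$ is the specific device that lets it sidestep $\kappa(A)$.
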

\begin{proof}
Similar to Equation (\ref{eq:jacobi-1}), we have the following. 
\begin{align}
||r'^{(t)}|| & = || b - A x'^{(t)}|| = ||b - Ax^{(t)} + A(x^{(t)} - x'^{(t)}) || \nonumber \\
& \leq ||r^{(t)}|| + ||Ae|| \leq ||r^{(t)}|| + eb \nonumber \cdot ||Ax^{(t)}||  \\
& = ||r^{(t)}|| + eb \cdot ||b - r^{(t)}|| \leq (1+eb)||r^{(t)}|| + eb \cdot ||b|| \nonumber \\
& \approx ||r^{(t)}|| + eb \cdot ||b|| 
\label{eq:residual-norm-error}
\end{align}
If $eb$ is set to $O$($||r^{(t)}||/||b||$), then $eb \cdot ||b||$ is O($||r^{(t)}||$); hence, $||r^{(t)}|| + eb \cdot ||b||$ is O($||r^{(t)}||$), which means that the new residual norm $||r'^{(t)}||$ will be of the same order as the previous residual norm $||r^{(t)}||$ based on Equation (\ref{eq:residual-norm-error}).
\end{proof}
Thanks to error-bounded compressors such as SZ and ZFP, one can easily control the distortion of data within $eb \cdot ||x^{(t)}||$. Theorem \ref{th:gmres} indicates that the convergence rate of GMRES will not degrade if the distortion of lossy checkpointing data follows a relative error bound $||r^{(t)}||/||b||$, where $t$ is the current iteration number. 

Now we can get a reasonable expected number of extra iterations for GMRES. As presented in Langou et al.'s study \cite{langou}, if it is the same order of residual norm with which the restarted GMRES forms a new approximate solution, GMRES will converge to the same accuracy with no delay or even exhibit an accelerated convergence sometimes.
The key reason is that the GMRES is easy to stagnate in its practical execution. If a failure occurs during the stagnation, the alternated recovered data can form a new approximate solution with different spectral properties, which may help GMRES jump out of the stagnation. 
This phenomenon has been theoretically and empirically observed and proved by Langou et al. \cite{langou}. Considering such a feature, the restarted GMRES with our proposed lossy checkpointing can converge without any delay based on the compression error bound suggested by Theorem \ref{th:gmres} with an ensured, controlled residual norm. 
Thus, we can set the expected $N'$ of GMRES with lossy checkpointing to $0$.
As a result, our lossy checkpointing is highly suitable for the restarted GMRES.


\subsubsection{Conjugate Gradient}
\label{sec:cg-bound}
The conjugate gradient method is usually used in non-restarted style and has a superlinear convergence rate. 
As discussed in Section \ref{sec:lossy-chkpt-scheme}, however, we adopt restarted CG with lossy checkpointing.
After a restart, it has to re-establish a new Krylov subspace based on the new initial guess. In our case, the new initial guess is the recovered vector $x'^{t}$ (decompressed by lossy compressor).
This process can lead to a delay of convergence to some extent.
Unlike the GMRES method, even if we can ensure that $eb \cdot ||x^{(t)}||$ is the same order as $||x^{(t)} - x^*||$, shown in Equation (\ref{eq:residual-norm-error}), the extra convergence steps for CG exhibit a property of randomness. Thus, for the CG method, we turn from theoretical analysis to an empirical evaluation for $N'$. 

For each experiment, we randomly select an iteration to compress the approximate solution vector, decompress it to continue the computations, and then count the number of extra iterations. We evaluate the average extra iterations with different relative error bounds, as shown in Figure \ref{fig:f5}. The figure shows that the errors introduced by lossy checkpointing may delay the convergence of CG method to a certain extent. Based on our evaluation, the average extra iterations varies from 10\% to 25\% with different error bounds.

\begin{figure}[]
\centering
\includegraphics[scale=0.50]{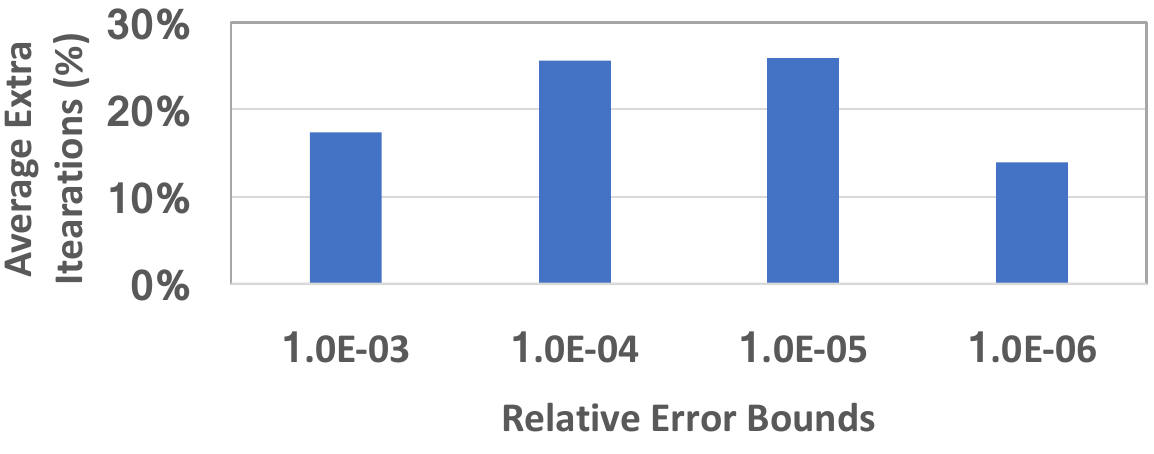}
\caption{Average extra iterations of CG method per lossy recovery with different error bounds.}
\label{fig:f5}
\end{figure}

\subsubsection{Reproducibility with Lossy Checkpointing}
Based on our experiments, iterative methods with our lossy checkpointing can still converge to a solution that satisfies the user-set accuracy. Moreover, the variance of the solution is much smaller than the user-set convergence tolerance threshold. Hence, our lossy checkpointing has an impact on bit-level reproducibility but only has a negligible impact on tolerance-based reproducibility of iterative methods and outer applications.
\section{Performance Evaluation}
\label{sec:evaluation}

In this section, we evaluate our proposed lossy checkpointing technique for iterative methods and compare it with traditional checkpointing and lossless checkpointing. 

\subsection{Experimental Setting}
We conduct our evaluation using 2,048 cores (i.e., 64 nodes, each with two Intel Xeon E5-2695 v4 processors and 128 GB memory, and each processor with 16 cores) from the Bebop cluster \cite{bebop} at Argonne National Laboratory.
Its I/O and storage systems are typical of high-end supercomputer facilities.

We implement our lossy checkpointing technique based on the FTI checkpointing library (v0.9.5) \cite{fti} and SZ lossy compression library (v1.4.12) \cite{sz17}.
The code is available in \cite{fti-sz}.
We use the MPI-IO mode \cite{mpiio} in FTI to write the checkpointing data to the parallel file system.
For the lossy compression, compared with other lossy compressors (such as ZFP \cite{zfp} and Tucker decomposition \cite{tucker}), SZ has a better performance for 1D data sets, as demonstrated in \cite{sz17, sz16}.
Most dynamic variables in lossy checkpointing are 1D vectors; hence, in this paper, we select SZ as our lossy compression approach.
We use a reasonable relative error bound of $10^{-4}$ \cite{sz16, sz17} for Jacobi and CG and set the relative error bound suggested by Theorem 3 for GMRES.
We choose the Gzip \cite{gzip} lossless compressor to represent the state-of-the-art lossless compression for comparison.
We call the checkpointing without compression as ``traditional checkpointing'' and the checkpointing with lossless compression as ``lossless checkpointing'' in order to correspond to lossy checkpointing.

We evaluate our proposed lossy checkpointing technique for the iterative methods implemented in PETSc (v3.8) \cite{petsc}.
We adopt its default preconditioner (block Jacobi with ILU/IC) and use the relative convergence tolerance \footnote{relative decrease in the (possibly preconditioned) residual norm with the default value of $10^{-5}$ in PETSc.} (denoted by $rtol$) of $1\mathrm{e}{-4}$, $7\mathrm{e}{-5}$, and $1\mathrm{e}{-7}$ for Jacobi, GMRES, and CG, respectively.
For GMRES, we use PETSc's recommended setting $30$ as its restarted step (i.e., GMRES(30)).

For demonstration purposes, we focus on solving the following sparse linear system (arising from discretizing a 3D Poisson's equation):
\begin{equation}
\label{eq:linear-system}
A_{n^3 \times n^3} x_{n^3 \times 1} = b_{n^3 \times 1},
\end{equation}
where
{
\begin{equation*}
\resizebox{.85\hsize}{!}{%
$A_{n^3 \times n^3} = 
\begin{pmatrix}
 M_{n^2 \times n^2}& I_{n^2 \times n^2}  &  &  & \\ 
 I_{n^2 \times n^2}& M_{n^2 \times n^2} &  I_{n^2 \times n^2}  &  & \\ 
 & \ddots & \ddots & \ddots & \\ 
 &  & I_{n^2 \times n^2} &  M_{n^2 \times n^2} &I_{n^2 \times n^2} \\ 
 &  &  & I_{n^2 \times n^2} &  M_{n^2 \times n^2}
\end{pmatrix},$}
\end{equation*}
\begin{equation*}
\resizebox{.7\hsize}{!}{%
$M_{n^2 \times n^2}  = 
\begin{pmatrix}
 T_{n \times n}& I_{n \times n}  &  &  & \\ 
 I_{n \times n}& T_{n \times n} &  I_{n \times n}  &  & \\ 
 & \ddots & \ddots & \ddots & \\ 
 &  & I_{n \times n} &  T_{n \times n} &I_{n \times n} \\ 
 &  &  & I_{n \times n} &  T_{n \times n}
\end{pmatrix},$}
\end{equation*}
\begin{equation*}
\resizebox{.5\hsize}{!}{%
$T_{n \times n}  = 
\begin{pmatrix}
  -6 & 1  &  &  & \\ 
 1 & -6 &  1  &  & \\ 
 & \ddots & \ddots & \ddots & \\ 
 &  & 1 &  -6 & 1 \\ 
 &  &  & 1 &  -6
\end{pmatrix},$}
\end{equation*}
}
so that we can increase the problem size as the scale increases.

Note that all stationary methods are similar to each other. Hence, without loss of generality, we focus our experiments for stationary iterative methods on the Jacobi method.
For nonstationary methods, we note that the sparse matrix $A_{n^3\times n^3}$ is symmetric and positive definite; hence, it can be used to test both CG and GMRES. 

\begin{table}[]
\centering
\caption{Problem sizes and average checkpoint sizes with different iterative methods and number of processes on Bebop}
\label{tab:size}
\begin{adjustbox}{max width=\columnwidth}
\begin{tabular}{|c|c|c|c|c|c|c|l|c|c|l|}
\hline
\multirow{3}{*}{\textbf{\begin{tabular}[c]{@{}c@{}}Num.\\ of\\ Proc.\end{tabular}}} & \multirow{3}{*}{\textbf{\begin{tabular}[c]{@{}c@{}}Problem\\ Size\end{tabular}}} & \multicolumn{9}{c|}{\textbf{Checkpoint Size Per Proc (MB)}}                                                                                                                                                                                                                                      \\ \cline{3-11} 
                                                                                  &                                                                                  & \multicolumn{3}{c|}{\textbf{\begin{tabular}[c]{@{}c@{}}Traditional\\ Checkpointing\end{tabular}}} & \multicolumn{3}{c|}{\textbf{\begin{tabular}[c]{@{}c@{}}Lossless\\ Checkpointing\end{tabular}}} & \multicolumn{3}{c|}{\textbf{\begin{tabular}[c]{@{}c@{}}Lossy\\ Checkpointing\end{tabular}}} \\ \cline{3-11} 
                                                                                  &                                                                                  & Jacobi                           & GMRES                          & CG                            & \multicolumn{1}{l|}{Jacobi}                   & GMRES                  & CG                    & \multicolumn{1}{l|}{Jacobi}                  & GMRES                 & CG                   \\ \hline
256                                                                               & $1088^3$                                                                         & 38.4                             & 38.4                           & 76.8                          & 5.99                                          & 34.6                   & 69.5                  & 1.33                                         & 1.23                  & 1.69                 \\ \hline
512                                                                               & $1368^3$                                                                         & 38.2                             & 38.2                           & 76.4                          & 5.96                                          & 34.0                   & 71.2                  & 1.35                                         & 1.13                  & 1.58                 \\ \hline
768                                                                               & $1568^3$                                                                         & 38.3                             & 38.3                           & 76.6                          & 5.98                                          & 34.1                   & 73.6                  & 1.37                                         & 1.21                  & 1.47                 \\ \hline
1024                                                                              & $1728^3$                                                                         & 38.4                             & 38.4                           & 76.8                          & 5.99                                          & 34.0                   & 69.4                  & 1.28                                         & 1.18                  & 1.49                 \\ \hline
1280                                                                              & $1856^3$                                                                         & 39.9                             & 39.9                           & 79.8                          & 6.24                                          & 33.6                   & 69.1                  & 1.33                                         & 1.19                  & 1.46                 \\ \hline
1536                                                                              & $1968^3$                                                                         & 39.7                             & 39.7                           & 79.4                          & 6.20                                          & 33.1                   & 69.2                  & 1.23                                         & 1.17                  & 1.42                 \\ \hline
1792                                                                              & $2064^3$                                                                         & 39.3                             & 39.3                           & 78.6                          & 6.13                                          & 32.8                   & 70.7                  & 1.30                                         & 1.17                  & 1.35                 \\ \hline
2048                                                                              & $2160^3$                                                                         & 39.4                             & 39.4                           & 78.8                          & 6.15                                          & 32.7                   & 67.9                  & 1.16                                         & 1.16                  & 1.33                 \\ \hline
\end{tabular}
\end{adjustbox}
\end{table}

In this paper, we focus mainly on the weak-scaling study for performance evaluation. 
We choose the largest problem size that can be held in memory by using 2,048 cores (i.e., 64 nodes) for GMRES(30), as shown in Table \ref{tab:size}.
For consistency, we also adopt these sizes for the Jacobi method and CG.
Table \ref{tab:size} also shows the corresponding checkpointing sizes per process with different scales (from $256$ to $2,048$ processes) and different checkpointing solutions.

\subsection{Evaluation of Iterative Methods with Large-scale Sparse Matrix from SuiteSparse}
Before evaluating the lossy checkpointing for iterative methods, we first evaluate the productive execution time of iterative methods with the largest symmetric indefinite sparse matrix (i.e., KKT240 with around 28 million linear equations) in SuiteSparse \cite{ufl} using 4,096 processes/cores on the Bebeop cluster at Argonne, as shown in Figure \ref{fig:f10}. Symmetric indefinite KKT matrices are generated from a nonlinear programming problems for a 3D PDE-constrained optimization problem \cite{nlpkkt}. We refer readers to \cite{nlpkkt240} for more details of the matrices. Note that we use GMRES for demonstration purpose, since it is much faster than Jacobi and CG cannot handle indefinite matrix. We test all the preconditioners listed in the PETSc's website \cite{petsc-linear-solver} and choose the best one (i.e., Jacobi preconditioner). We use the relative convergence tolerance of $1\mathrm{e}{-6}$. Figure \ref{fig:f10} shows that the average productive execution time for solving KKT240 once with GMRES needs to take more than one hour with 4,096 processes. Moreover, we note that the dimensions of the matrices collected by SuiteSparse grow exponentially with years \cite{ufl}. Therefore, it will be more common to spend hours to days running iterative methods with a large number of ranks in parallel for very large-scale sparse linear systems; on the other hand, the mean time between failures for petascale supercomputers could be hourly or even less than one hour, as demonstrated by a recent study \cite{liu2018large} based on a three-year statistic of Sunway TaihuLight supercomputer \cite{fu2016sunway}. These results demonstrate that checkpointing during iterative methods is very important for the future HPC applications and exascale systems.

\begin{figure}[]
\centering
\includegraphics[scale=0.58]{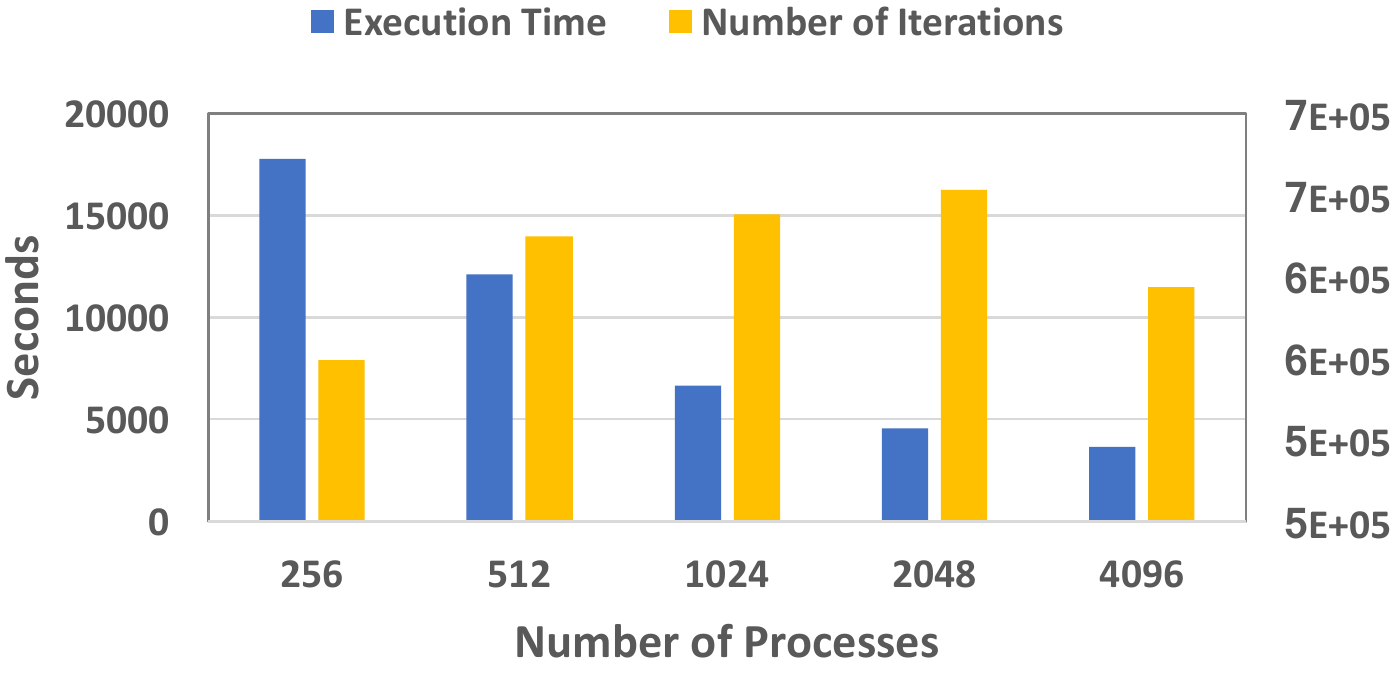}
\caption{Average productive execution times and numbers of convergence iterations for solving matrix KKT240 once using GMRES and Jacobi preconditioner with different number of processes on Bebop.}
\label{fig:f10}
\end{figure}

\subsection{Theoretical Performance Investigation}
We next perform the experiments with three checkpointing solutions under a fixed checkpoint frequency.
The objective is to obtain the mean size and time of one checkpoint/recovery across different iterations from beginning to end for the three solutions.
We set the checkpointing/recovering frequency to six times per hour and run each experiment for five times to ensure that the checkpoints/recoveries can cover the entire iteration.
We calculate the average size and time of one checkpoint/recovery with different scales.
We present the average checkpoint/recovery sizes for Jacobi, GMRES, and CG in Table \ref{tab:size}.
We present the average checkpoint/recovery time with different checkpointing solutions for Jacobi, GMRES, and CG in Figures \ref{fig:f7}, \ref{fig:f3}, and \ref{fig:f9}, respectively. 

\begin{figure}[]
\centering
\includegraphics[scale=0.68]{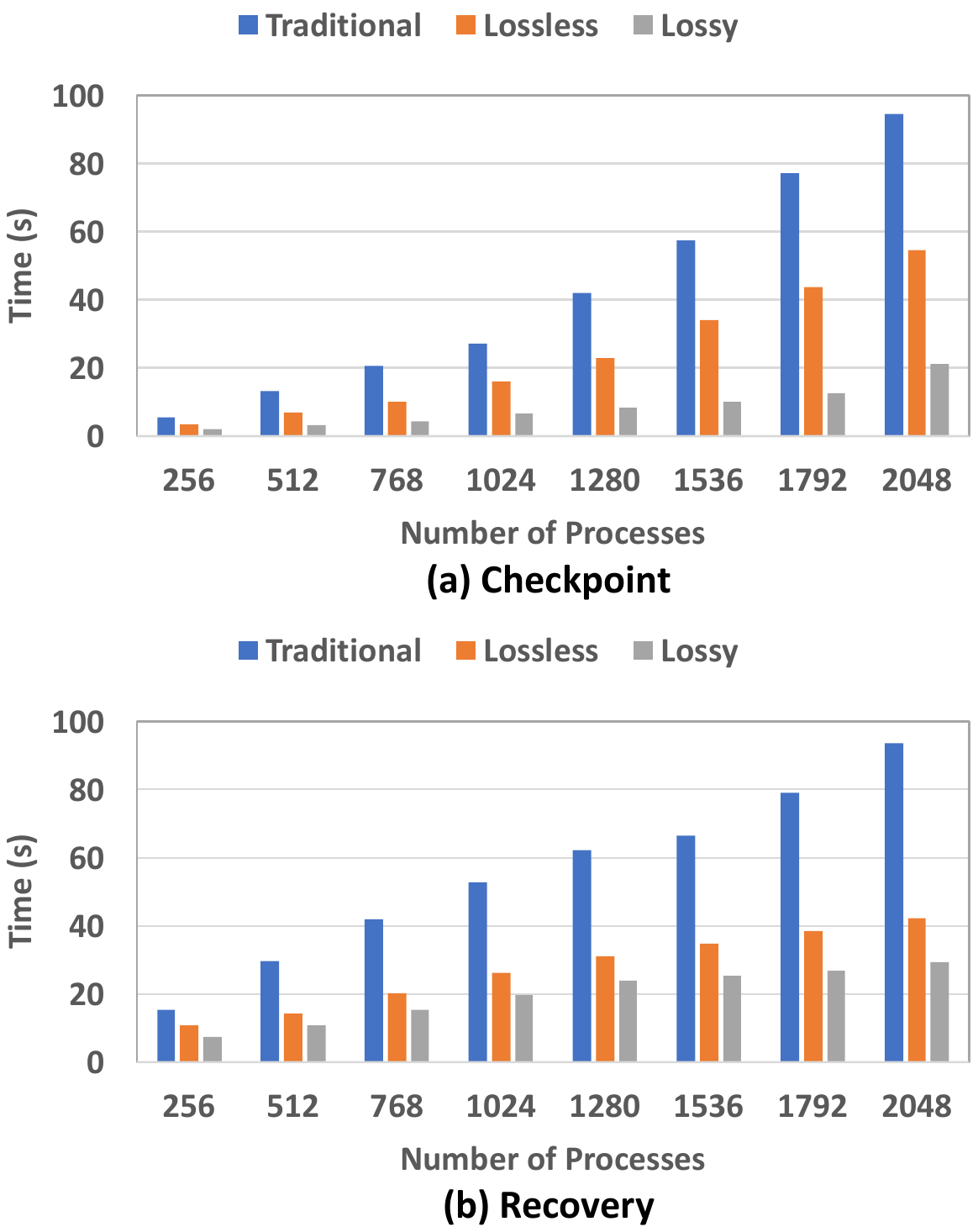}
\caption{Average time of one checkpoint and recovery for Jacobi with different checkpointing techniques on Bebop.}
\label{fig:f7}
\end{figure}

Table \ref{tab:size} illustrates that lossy compression can significantly reduce the checkpointing size compared with traditional and lossless checkpointing.
Specifically, SZ lossy compression can reduce the checkpointing size to about $1/20 \sim 1/60$, whereas the lossless compression can  achieve a compression ratio only up to about $6$. 
Consequently, the checkpoint/recovery time can be  reduced significantly for Jacobi, GMRES, and CG compared with the other two solutions, as shown in Figure \ref{fig:f7}, \ref{fig:f3}, and \ref{fig:f9}.
Comparing the three figures, we can see that the lossy checkpointing reduce checkpoint/recovery time more significantly for CG than for Jacob and GMRES.
The reason is that the traditional and lossless checkpointing methods need to checkpoint/recover two vectors ($x$ and $p$) for CG (as shown in line \textcolor{red}{4} in Algorithm 1) \cite{chen2011algorithm,chen2013online}, in that reinitializing $p$ based $x$ will lead to unknown delays (extra iterations).
However, we have investigated the impact of lossy checkpoints on extra iterations of restarted CG, thus only the vector $x$ needs to be checkpointed/recovered in our lossy checkpointing scheme. In addition, it is also observed that the checkpointing and recovery overhead both increase approximately linearly with the number of processes, because of linear increasing of the total checkpointing data size and the constant I/O bandwidth. In fact, such an I/O time increase is inevitable for any PFS considering the limited I/O bandwidth. 

\begin{figure}[]
\centering
\includegraphics[scale=0.69]{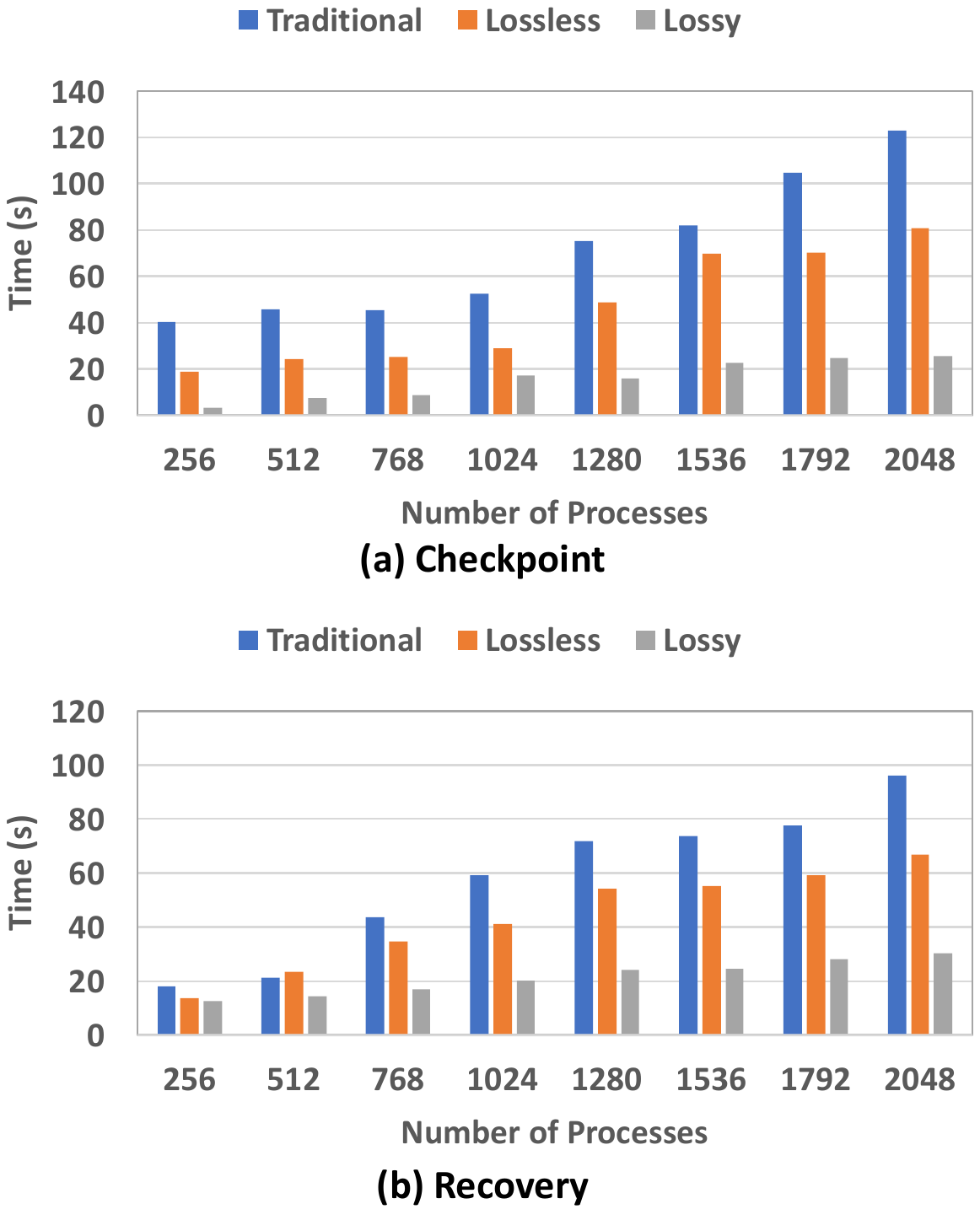}
\caption{Average time of one checkpoint and recovery for GMRES with different checkpointing techniques on Bebop.}
\label{fig:f3}
\end{figure}

We include the compression and decompression time in the checkpoint/recovery time. 
The recovery time also contains the time to reconstruct static variables, including matrix $A$, preconditioner $M$, and right-hand side vector $b$. 
As shown in \cite{sz17}, because of no communication in parallel compression and decompression, the efficiency of parallel compression can stay at $90\%$,
and the compression and decompression speed can reach  $80$ and $180$ GB/s with 1,024 cores, respectively.
Therefore, the compression and decompression take only a small portion of time in the checkpoint/recovery.
Specifically, compressing and decompressing the $78.8$ GB of checkpointing data take only about $0.5$ seconds and $0.2$ seconds, respectively.
Such cost is nearly negligible compared with the average checkpoint/recovery time. 
Note that the time spent on I/O will increase with the number of processors, because of the inevitable bottleneck of the bandwidth when writing/reading data by many processes simultaneously (even with parallel I/O).
By contrast, parallel compression/decompression time increases little with the number of processors, which means the performance gains by lossy checkpointing will increase with scales.

\begin{figure}[]
\centering
\includegraphics[scale=0.69]{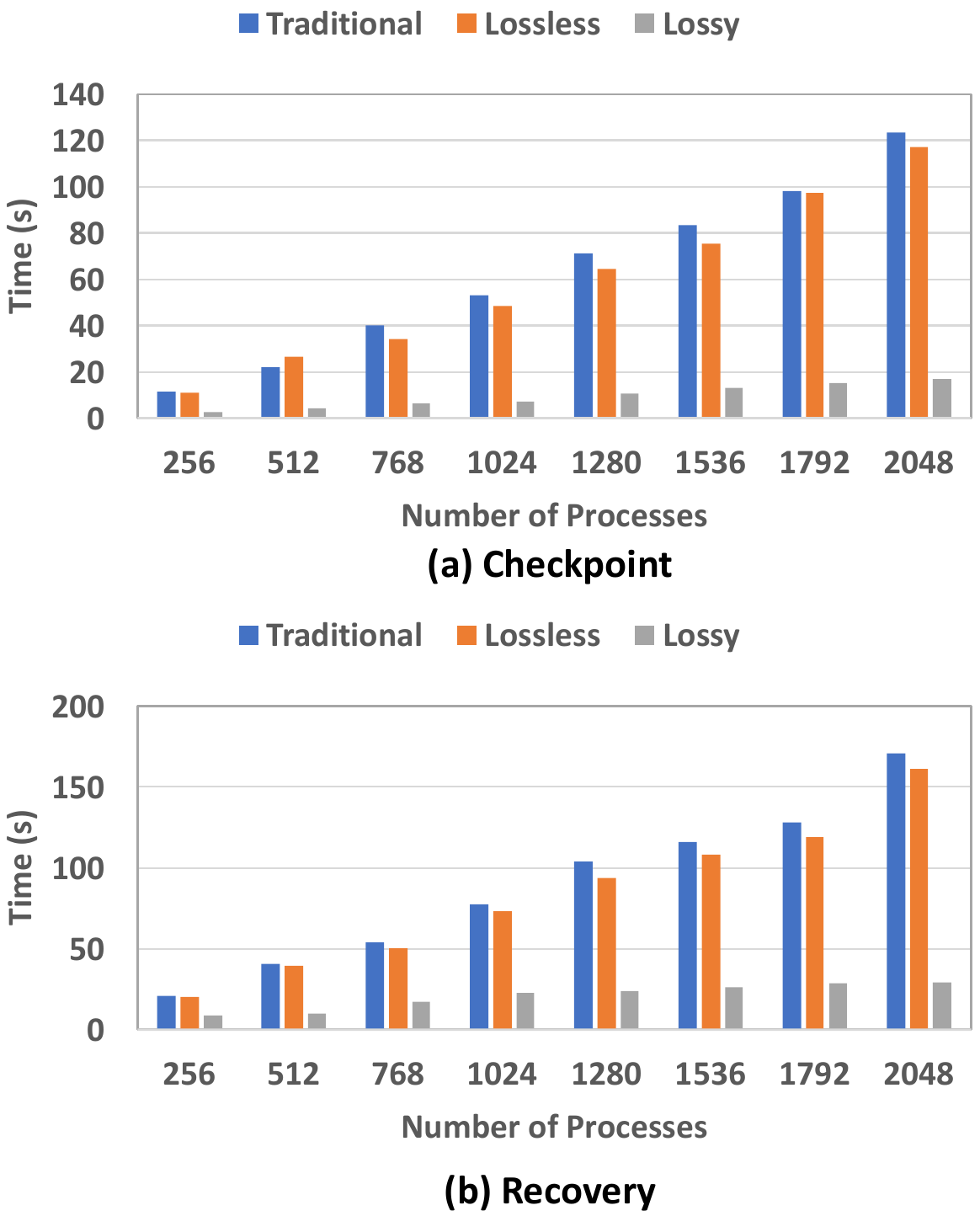}
\caption{Average time of one checkpoint and recovery for CG with different checkpointing techniques on Bebop.}
\label{fig:f9}
\end{figure}

Based on the evaluated checkpointing time for different iterative methods (as shown in Figure \ref{fig:f7}, \ref{fig:f3}, and \ref{fig:f9}) and Equation (\ref{eq:lossy-overhead}), we can theoretically analyze the expected fault tolerance overhead for Jacobi, GMRES, and CG with two failure rates (i.e., MTTI = $1$ hour and MTTI = $3$ hours), as shown in Figure \ref{fig:f2}.
Note that for the Jacobi method, the expectation of $N'$ is about $6$, which is calculated based on the interval
$[\mathit{\frac{N+1}{2}-\log_R(R^{\frac{N+1}{2}}+eb)}$, $\mathit{N-\log_R(R^N+eb)}]$,
where $N = 3941$ and $eb = 10^{-4}$.
We estimate the spectral radius $R$ based on the final relative norm error and the number of convergence iterations. In our experiments, $R \approx 0.99998$.
Following the discussion in Section \ref{sec:impact}, we set $N'$ to $0$ for GMRES and $594$ for CG (i.e., $25\%$ of the CG's total iterations). 

\begin{figure}[]
\centering
\includegraphics[scale=0.57]{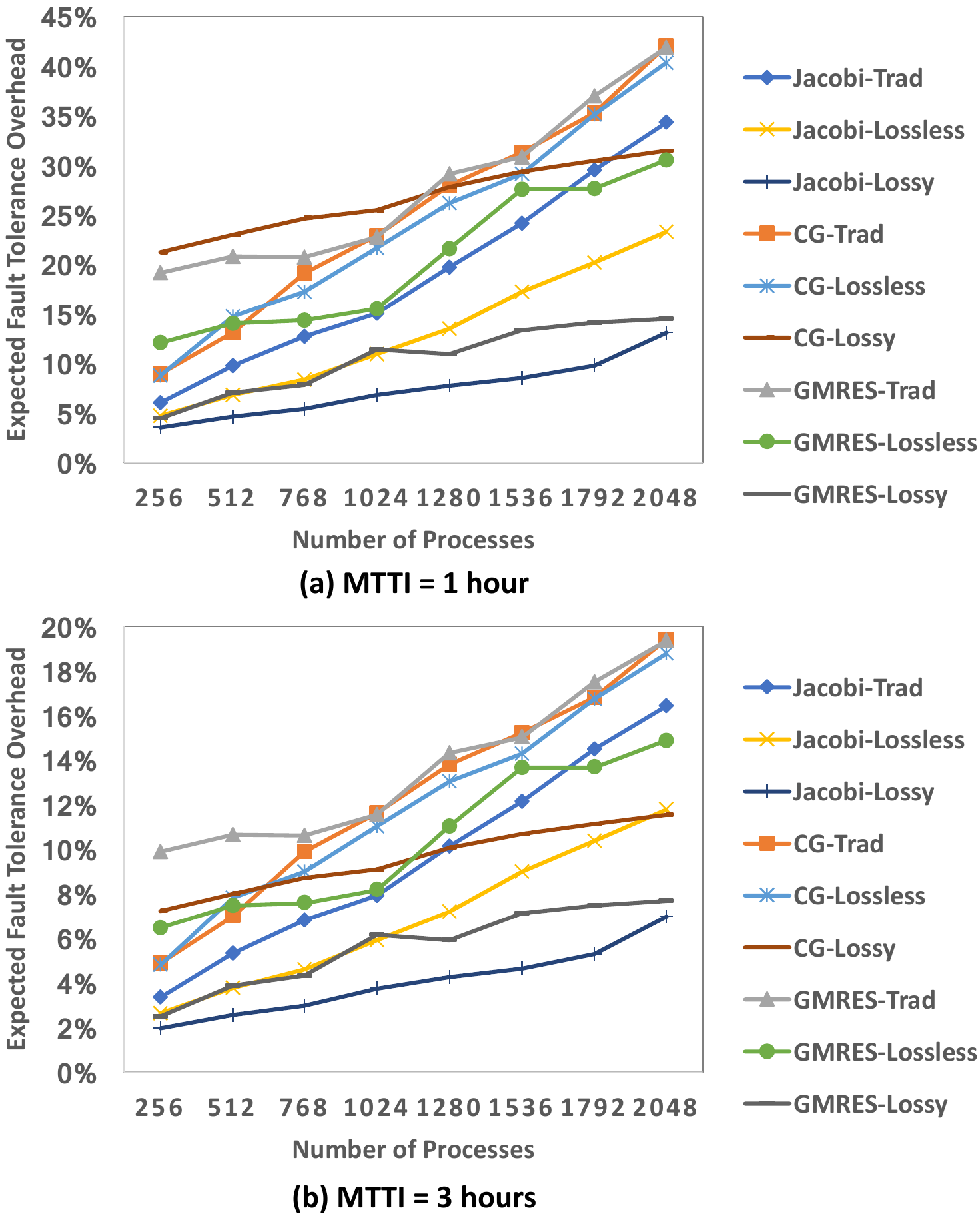}
\caption{Expected overhead of fault-tolerant Jacobi, GMRES, \& CG with different checkpointing techniques and failure rates on Bebop.}
\label{fig:f2}
\end{figure}

Figure \ref{fig:f2} illustrates that for both failure rates, the expected fault tolerance overhead of our proposed lossy checkpointing is always better than that of the other two solutions for Jacob and GMRES.
For CG, the expected overhead of lossy checkpointing is better than that of the other two solutions when the number of processes is greater than $1536$ and $768$ for the two failure rates, respectively. 
We note that in Figure \ref{fig:f2}, the curves with lossy checkpointing increase much slowly than the curves with other two checkpointing solutions, thus demonstrating that our proposed lossy checkponting is expected to achieve more performance gain as scale increases compared with the other two solutions. 
In the next subsection we will use the optimal checkpoint interval with given failure rate to experimentally prove this conclusion.

\subsection{Experimental Evaluation}
In this subsection, we evaluate the fault tolerance overhead experimentally for the three solutions with their corresponding optimal checkpointing intervals in the presence of injected failures.
As described in Section \ref{sec:related}, the MTTI can be almost hourly; hence, we inject failures with the rate being one failure per hour (i.e., $T_f = 3600$ seconds) in the experiment.
Each failure may occur randomly at any time, including during computations of iterative methods and in the checkpoint/recovery period.
The failure intervals follow an exponential distribution, because this is a common behavior of a system for most of its lifetime.
According to Young's formula (as shown in Equation (\ref{eq:young})), 
we can calculate the optimal checkpointing interval for the three solutions based on this failure rate and their checkpointing time as shown in Figure \ref{fig:f3}.
Specifically, the calculated optimal checkpoint intervals for the traditional, lossless, and lossy checkpointing are $16$ minutes, $12$ minutes, and $7$ minutes, respectively.
We run each case with 2,048 processes/cores on Bebop ten times and investigate the average overall running time.
The baseline time of the iterative methods is the overall productive execution time of solving the 3D Poisson equation (as shown in Equation (\ref{eq:linear-system})) \textit{once} without checkpointing and failure interruption. Specifically, the baseline times of Jacobi, GMRES, and CG are about 50 minutes, 120 minutes, and 35 minutes, respectively.
We also compare the experimental overhead with the expected overhead derived theoretically by our performance model.

\begin{figure}[]
\centering
\includegraphics[scale=0.53]{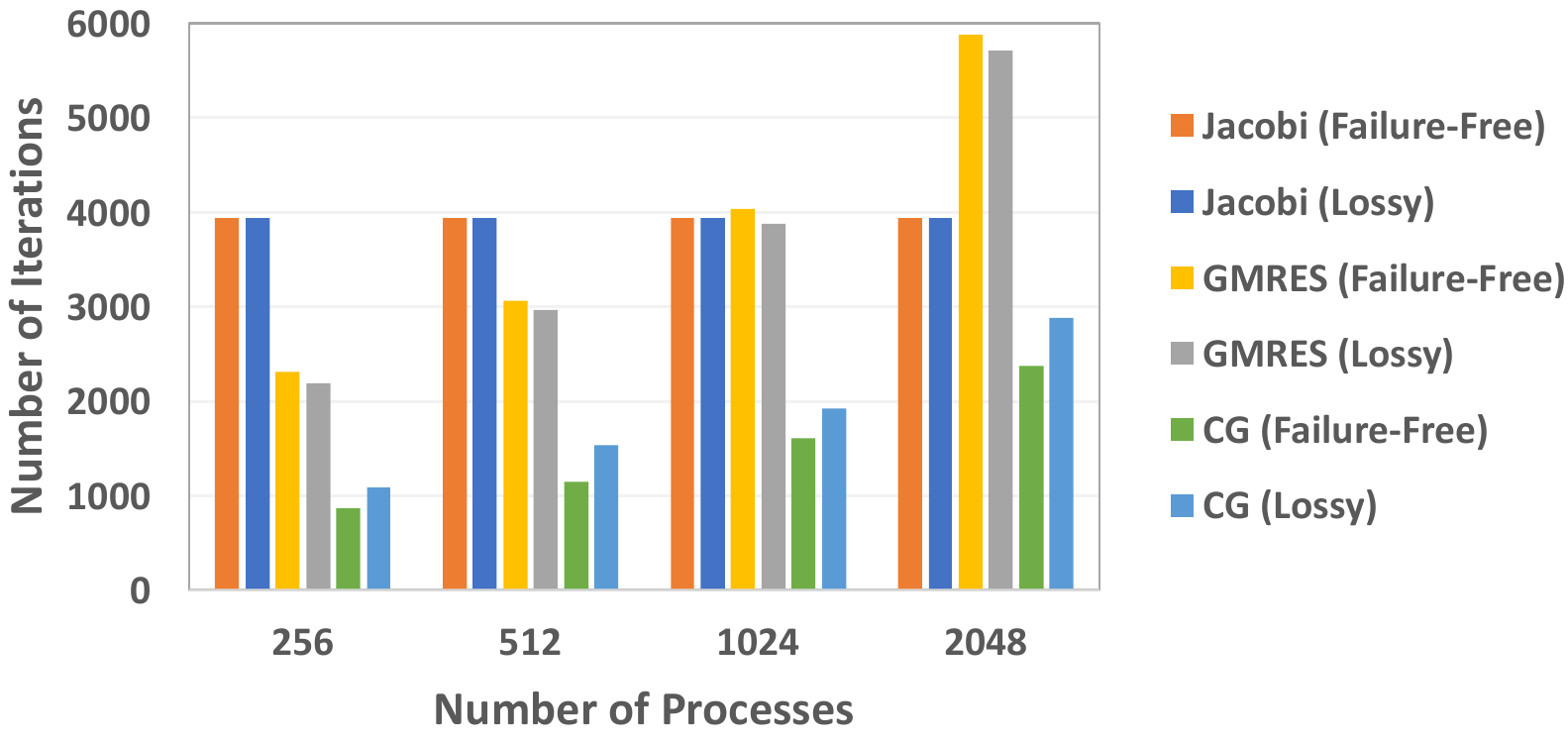}
\caption{Number of convergence iterations with lossy checkpointing method for Jacobi, GMRES, and CG on Bebop.}
\label{fig:f6}
\end{figure}

\begin{figure}[]
\centering
\includegraphics[scale=0.36]{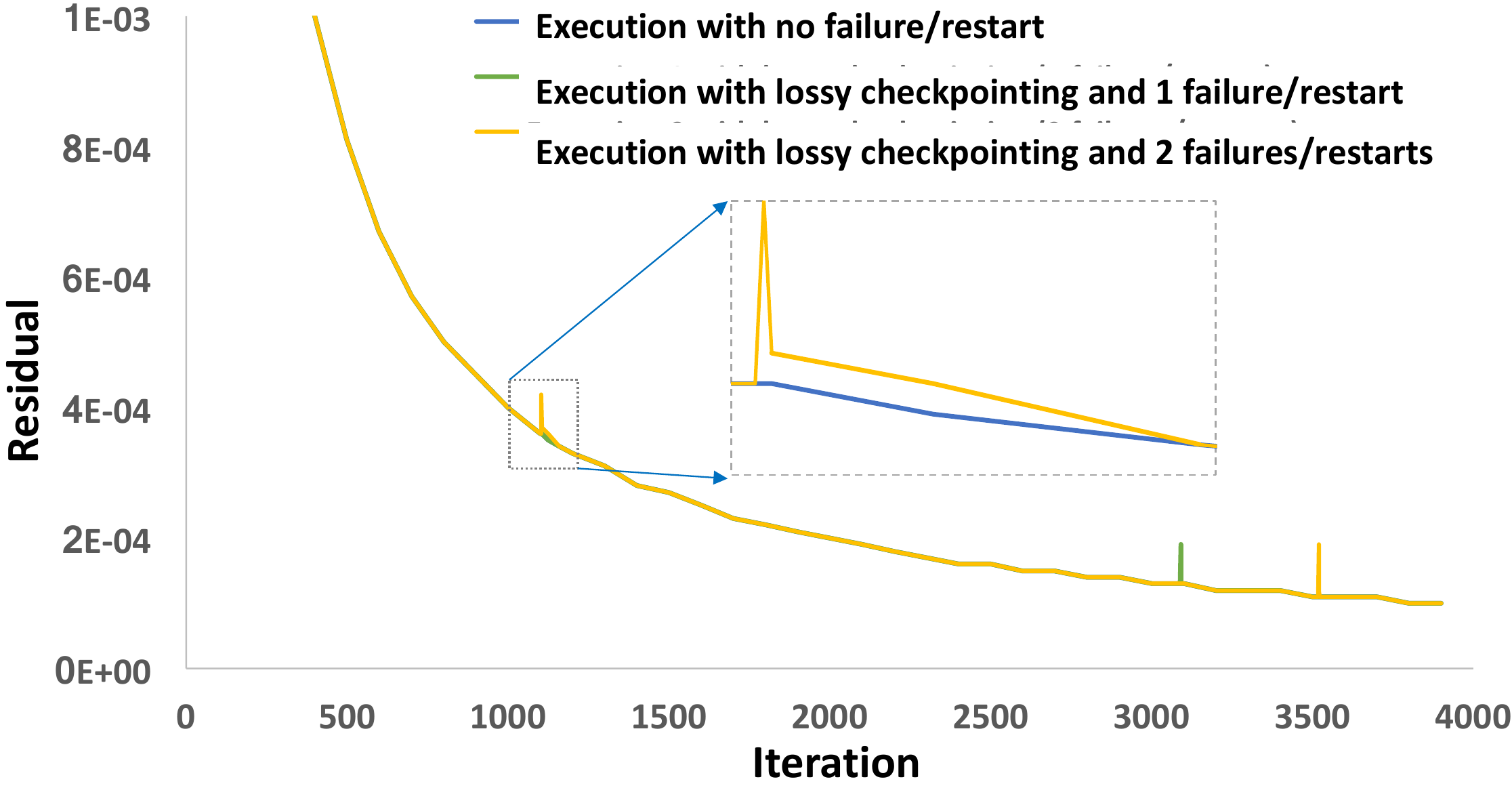}
\caption{Typical example executions of Jacobi method with lossy checkpointing on Bebop.}
\label{fig:f8}
\end{figure}

Figure \ref{fig:f6} presents the numbers of convergence iterations with lossy checkpointing for Jacobi, GMRES, and CG compared with their baseline executions (failure-free) on the Bebop cluster with 2,048 processes.
The experiments illustrate that lossy checkpointing under our settings (including convergence accuracy, error bound and failure rate) introduces no delay (i.e., $0$ extra iterations) on the convergence for Jacobi method, as shown in Figure \ref{fig:f6}. 
This is consistent with our theoretical analysis in Section \ref{sec:stat-bound}: the upper bound of the number of extra iterations $N'$ is $6$, based on the interval $[\mathit{\frac{N+1}{2}-\log_R(R^{\frac{N+1}{2}}+eb)}, \mathit{N-\log_R(R^N+eb)}]$ with $N = 3941$ and $eb = 10^{-4}$.  
Figure \ref{fig:f6} shows that lossy checkpointing slightly accelerates the convergence of GMRES in the condition of bounding the jump of the residual (Theorem 3), which is consistent with the analysis and discussion presented in Section \ref{sec:gmres-bound}.
Figure \ref{fig:f6} also illustrates that the lossy checkpointing with $eb = 10^{-4}$ and $T_f = 3600$ seconds will delay the convergence of CG by $24.8\%$ on average in terms of the convergence iterations, which is consistent with the analysis presented in Section \ref{sec:cg-bound} (as shown in Figure \ref{fig:f5}).

In Figure \ref{fig:f8}, we show two typical example executions of Jacobi method with lossy checkpointing. It shows that after a lossy recovery, Jacobi method can quickly converge to the same residual value as the failure-free Jacobi does, with no extra iterations.

\begin{figure}[]
\centering
\includegraphics[scale=0.52]{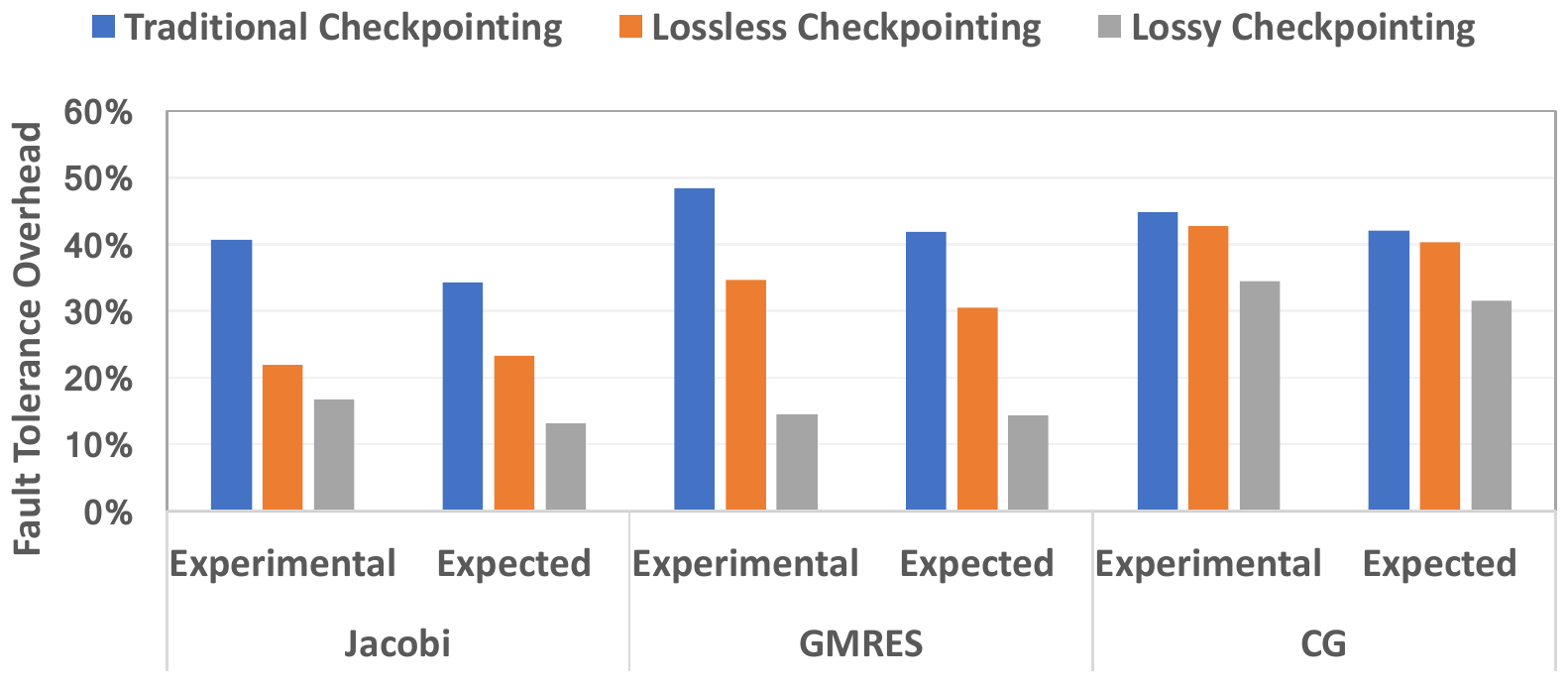}
\caption{Experimental overhead versus expected overhead of fault-tolerant Jacobi, GMRES, and CG with different checkpointing techniques on Bebop.}
\label{fig:f4}
\end{figure}

In Figure \ref{fig:f4} we present the average fault tolerance overhead of these three solutions with 2,048 processes on Bebop.
Here the fault tolerance overhead refers to the overall running time taking away the baseline time. 
The figure illustrates that our proposed lossy checkpointing outperforms the other two solutions with Jacobi, GMRES, and CG.
Specifically, for Jacobi, our solution reduces the fault tolerance overhead by $59\%$ compared with the traditional checkpointing and $24\%$ compared with the lossless checkpointing. 
For GMRES, our solution outperforms the traditional checkpointing and the lossless checkpointing by $70\%$ and $58\%$, respectively, in terms of the fault tolerance overhead.
For CG, our solution reduce the fault tolerance overhead by $23\%$ and $20\%$ compared with the traditional and lossless checkpointing, respectively.
Note that the experimental overheads for traditional and lossless checkpointing are higher than their expected overheads except for Jacobi with lossless checkpointing.
The reason could be that the expected overhead formulas (Equations (\ref{eq:lossless-overhead}) and (\ref{eq:lossy-overhead})) assume that the checkpointing time equals the recovery time, namely, $T_{rc} = T_{ckp}$.
Except for Jacobi with lossless checkpointing, however, the other solutions' recovery time is higher than their checkpointing time, as shown in Figures \ref{fig:f7}, \ref{fig:f3}, and \ref{fig:f9}, because of reconstructing static variables.
Hence, it will lead to a loss of accuracy between the experimental results and the expected analysis in terms of the fault tolerance overhead. 
Note that for our lossy checkpointing, there is only a small difference (up to about $10\%$) between the experimental overhead and the expected overhead, as shown in Figure \ref{fig:f4}.

\section{Conclusion and Future Work}
\label{sec:conclusion}

In this paper, we propose a novel, efficient lossy checkpointing scheme, by exploring how to efficiently leverage the lossy compression technique to improve the overall checkpointing/restart performance for iterative methods in failure prone environment. We have four significant contributions: 
(1) we propose a lossy checkpointing scheme that can significantly improve the checkpointing performance for iterative methods; (2) we formulate the lossy checkpointing performance model and quantify the tradeoff between the reduced checkpointing overhead and the extra number of iterations caused by the compression errors; (3) we analyze the impact of the lossy checkpointing for multiple types of iterative methods; and (4) we evaluate the lossy checkpointing solution using a parallel environment with 2,048 cores. Our experiments show that our lossy checkpointing method can significantly reduce the fault tolerance overhead for the Jacobi, GMRES, and CG methods in the presence of failures, by $20\%{\sim}58\%$ compared with traditional checkpointing and more than $23\%{\sim}70\%$ compared with lossless compressed checkpointing. We plan to study how to extend our lossy checkpointing scheme to additional scientific application domains.
\section*{Acknowledgments}
\small
This research was supported by the Exascale Computing Project (ECP), Project Number: 17-SC-20-SC, a collaborative effort of two DOE organizations -- the Office of Science and the National Nuclear Security Administration, responsible for the planning and preparation of a capable exascale ecosystem, including software, applications, hardware, advanced system engineering and early testbed platforms, to support the nation's exascale computing imperative. The material was supported by the U.S. Department of Energy, Office of Science, under contract DE-AC02-06CH11357, and supported by the National Science Foundation under Grant No. 1305624, No. 1513201, and No. 1619253. We gratefully acknowledge the computing resources provided on Bebop, a high-performance computing cluster operated by the Laboratory Computing Resource Center at Argonne National Laboratory. We would like to thank Dr. Patrick Bridges for his helpful suggestions for the final paper.

\bibliographystyle{abbrv}
\bibliography{bib/refs}

\end{document}